\documentclass[a4paper, conference]{IEEEtran}

\IEEEsettopmargin{t}{30mm}
\IEEEquantizetextheight{c}
\IEEEsettextwidth{14mm}{14mm}
\IEEEsetsidemargin{c}{0mm}

\usepackage[utf8]{inputenc}
\usepackage[T1]{fontenc}
\usepackage{url}
\usepackage{cite}
\usepackage{url}
\usepackage{comment}
\usepackage{multicol}
\usepackage{cuted}
\usepackage[cmex10]{amsmath} 
\usepackage{amsmath,amssymb,amsthm,mathtools}
\usepackage{mathrsfs,bm,color}
\usepackage{bbm}
\usepackage{xcolor}
\usepackage{hyperref}
\hypersetup{
	colorlinks,
	linkcolor={red!50!black},
	citecolor={green!80!black},
	urlcolor={blue!80!black}
}
\usepackage{enumitem}

\newcommand{\bs}{\bm{\sigma}}

\newcommand{\bS}{\bm{\Sigma}}

\newcommand{\E}{\mathbb{E}}

\newcommand{\cS}{\mathcal{S}}

\newcommand{\N}{\mathbb{N}}

\newcommand{\R}{\mathbb{R}}
\newcommand{\Z}{\mathbb{Z}}
\newcommand{\Tr}{\operatorname{Tr}}

\DeclareMathOperator{\pP}{\mathbb{P}}

\DeclareMathOperator{\1}{\mathbbm{1}}

\newcommand{\bi}{\bm{i}}
\newcommand{\bx}{\bm{x}}

\newcommand{\bQ}{\bm{Q}}

\newcommand{\bA}{\bm{S}}
\newcommand{\bO}{\bm{O}}

\newcommand{\bP}{\bm{P}}

\newcommand{\bz}{\bm{z}}

\newcommand{\bX}{\bm{X}}
\newcommand{\bY}{\bm{Y}}
\newcommand{\bZ}{\bm{Z}}
\newcommand{\bB}{\bm{B}}

\DeclareMathOperator{\diag}{\mathrm{diag}}

\DeclareMathOperator{\supp}{\mathrm{supp}}

\DeclareMathOperator*{\argmax}{arg\,max}
\DeclareMathOperator*{\argmin}{arg\,min}

\newcommand{\Var}{\operatorname{Var}}

\theoremstyle{plain}
\newtheorem{theorem}{Theorem}
\newtheorem{lemma}{Lemma}

\newtheorem{proposition}{Proposition}

\theoremstyle{definition}

\theoremstyle{remark}
\newtheorem{remark}{Remark}

\begin{document}
\title{Equivalence of Fixed-Rank and Rank-One Even-Order Symmetric Tensor Factorization}

 \author{%
   \IEEEauthorblockN{Ruba Hussein Morsi\IEEEauthorrefmark{1}
     and Anas A. Rahman\IEEEauthorrefmark{2}}
   \IEEEauthorblockA{\IEEEauthorrefmark{1}%
              The University of Turin, Turin, Italy}
   \IEEEauthorblockA{\IEEEauthorrefmark{2}%
              The University of Hong Kong, Hong Kong\\rubahusseinismail.morsi@unito.it, aarahman@hku.hk}
 }

\maketitle
\thispagestyle{plain}
\pagestyle{plain}

\begin{abstract}
In the recent work of Barbier, Ko, and the second present author on sublinear-rank symmetric matrix factorization [Math. Stat. Learn. 9 (2026), 1--68], a key result is that, in the Bayes-optimal setting, the large-size limit of the free entropy of the finite-rank spiked Wigner model is the same as in the rank-one case when the signal has centered i.i.d. entries. In this paper, we show that this rank-one equivalence result extends to the case of finite-rank, even-order, symmetric tensor factorization. Moreover, we give a natural reformulation of a hypothesis that was stated in the aforementioned work to be necessary for this result. As in the matrix case, we use information-theoretic identities and replica symmetry to reduce a known multi-dimensional variational formula for the limiting free entropy to its one-dimensional analog. The novelty stems from the fact that said formula involves a replica symmetric potential containing Hadamard (entrywise) powers, rather than squares, of the matrix-valued variational parameter, so the eigenvalue-based approach used in the matrix case must be adjusted.
\end{abstract}

\begin{IEEEkeywords}
Spiked tensor model, information theory, rank-one equivalence, Bayesian inference, Hadamard product
\end{IEEEkeywords}

\section{Introduction}
Signal denoising problems are fundamental to a vast range of disciplines including statistical inference, machine learning, and signal processing. An ubiquitous approach for problems of this type is to model data as being of the ``signal plus noise'' form, with the earliest example being the spiked Wigner model introduced by Johnstone \cite{johnstone} in 2001. See \cite{BKR26} for a review on studies of this and similar spiked matrix models. An important generalization of these models is the spiked tensor model introduced by Montanari and Richard \cite{tensorPCA} in 2014. Studies of this model are motivated by recent and anticipated advances in data science, with tensors naturally encoding, e.g., data sets indexed by more than two parameters, higher-moment analogs of covariance matrices, and hyperedges in hypergraph matching problems; see \cite{tensorPCA,spikedtensor,MLapp} and references therein.

We consider the \textit{rank-$M$, order-$p$, symmetric, spiked tensor model} with ground truth signal matrix $\bX_0\in\R^{N\times M}$ drawn from some prior distribution $\pP_{X,N,M}$ and data generated as
\begin{equation} \label{eq1}
\bY=\sqrt{\frac{\lambda(p-1)!}{N^{p-1}}}\sum_{j=1}^MX_{0,\cdot j}^{\otimes p}+\bZ,
\end{equation}
where $X_{0,\cdot j}$ is the $j$-th column of $\bX_0$, $\bZ$ is an order-$p$ symmetric noise tensor with independently and identically distributed (i.i.d.) entries $Z_{i_1,\ldots,i_p}$ ($1\le i_1\le i_2\le\cdots\le i_p\le N$) drawn from the standard normal distribution $\mathcal{N}(0,1)$, and $\lambda$ is the signal-to-noise ratio (SNR). In terms of components, this is equivalent to observing, for $\bi:=(i_1,\ldots,i_p)\in\mathcal{I}$ with $\mathcal{I}:=\{\bi\in[N]^p\,:\,i_1\le\cdots\le i_p\}$,
\begin{equation}
Y_{\bi}=\sqrt{\frac{\lambda(p-1)!}{N^{p-1}}}\sum_{j=1}^MX_{0,i_1j}X_{0,i_2j}\cdots X_{0,i_pj}+Z_{\bi}.
\end{equation}

In the Bayes-optimal setting, where the signal reconstruction procedure may take into account knowledge of the form \eqref{eq1} of $\bY$, the prior distribution $\pP_{X,N,M}$, and the parameters $\lambda,M$, the statistical information of the above model is encoded in the \textit{free entropy} (equivalently, up to a simple additive term, the mutual information $I(\bX_0;\bY)$; see, e.g., \cite{LelargeMiolane} and cf.~Section \ref{s5})
\begin{equation} \label{eq3}
F_N(\lambda):=F_{N,M}(\lambda)=\frac{1}{NM}\E_{\bZ,\bX_0}\ln Z_N,
\end{equation}
where
\begin{equation}
Z_{N,M}:=\int_{\R^{N\times M}}e^{H_N(\bX)}\,d\pP_{X,N,M}(\bX)
\end{equation}
is the \textit{partition function} associated with the \textit{Hamiltonian}
\begin{align}
H_N(\bX)&:=\sum_{\bi\in\mathcal{I}}\Big(Y_{\bi}\pi_{\bi}-\frac{1}{2}\pi_{\bi}^2\Big),
\\\pi_{\bi}&:=\sqrt{\frac{\lambda(p-1)!}{N^{p-1}}}\sum_{j=1}^MX_{i_1j}X_{i_2j}\cdots X_{i_pj}.
\end{align}
In the terminology of Bayesian inference, the partition function $Z_{N,M}$ is the normalization of the posterior distribution $\pP_{X|Y,N,M}(\bX\mid \bY(\bX_0,\bZ))$ and the Hamiltonian $H_N(\bX)$ is the log-likelihood.

The large $N$ limit of the free entropy $F_N(\lambda)$ was first specified in the $p=2$ matrix case, with studies progressively addressing the regimes of $M=1$, $M={\rm O}(1)$, and most recently, $M={\rm o}(N^{1/20})$ \cite{reeves} and $M={\rm o}(\sqrt{\ln N})$ \cite{BKR26} (literature reviews on the finite-rank spiked matrix models can be found in these latter references). Works concerning proper spiked tensor models corresponding to $p>2$ have primarily been focused on the $M=1$ regime, with proof methods including the cavity method \cite{spikedtensor} (the heuristic replica method was also used herein to conjecture a finite-$M$ formula), adaptive interpolation \cite{ai1,ai2}, and Hamilton--Jacobi equations \cite{mourrat,mourrat2}; see also \cite{JLM20}. The adaptive interpolation method of \cite{ai1,ai2} was soon after pushed to the finite-rank case in \cite{adaptiveinterpolation}, where it was shown that for $p$ even and $\bX_0$ having i.i.d. rows drawn from a distribution $\pP_{X,M}$ with bounded support,
\begin{equation} \label{eq7}
\lim_{N\to\infty} F_N(\lambda)=\sup_{\bA\in\cS_M}F^{\rm RS}_{M,p}(\bA,\lambda),
\end{equation}
where $\cS_M$ denotes the set of symmetric positive semidefinite $M\times M$ matrices and, letting $\bz\sim\mathcal{N}(0,I_M)$ and $\bx_0\sim\pP_{X,M}$,
\begin{align}
F^{\rm RS}_{M,p}(\bA,\lambda)&:=\frac{1}{M}\E_{\bz,\bx_0}\ln Z^{\rm RS}_{M,p} -\frac{\lambda(p-1)}{2Mp}\sum_{\ell,\ell'=1}^M(\bA^{\circ p})_{\ell\ell'}, \label{eq8}
\\Z^{\rm RS}_{M,p}&:=\int_{\R^M}e^{\sqrt{\lambda}\bx^\intercal\sqrt{\bQ}\bz+\lambda\bx_0^\intercal\bQ\bx-\tfrac{\lambda}{2}\bx^\intercal\bQ\bx}\,d\pP_{X,M}(\bx) \label{eq9}
\end{align}
are the \textit{rank-$M$, order-$p$ replica symmetric potential} and \textit{partition function}. Here, $\bA^{\circ p}=[\bA_{ij}^p]_{i,j=1}^M$ denotes the $p$-th Hadamard power of $\bA$ and we henceforth write $\bQ:=\bA^{\circ(p-1)}$.

In the $p=2$, finite-$M$ case, it was shown in \cite{BKR26} that when the prior distribution is taken to factorize further into the form $\pP_{X,M}=\pP_X^{\otimes M}$ with some additional technical hypotheses on $\pP_X$, the right-hand side of equation \eqref{eq7} actually reduces to its $M=1$ equivalent, thereby demonstrating equivalence between the fixed-rank and rank-one spiked Wigner models considered therein (in fact, this was proven for $M={\rm o}(\sqrt{\ln N})$, but we are interested in the intermediary finite-$M$ result); see also \cite{bates2023parisi,chen2023potts} for analogous simplifications of vector spin Parisi formulas. Our goal in the remainder of this paper is to continue this line of work and show that this phenomenon holds true for all even integers $p$ and finite $M$. Our approach is the same as in \cite{BKR26}, except that arguments involving the eigenvalues of $\bA$ must now be rewritten in terms of its entries; note that when $p=2$, $\sum_{\ell,\ell'=1}^M(\bA^{\circ p})_{\ell\ell'}=\Tr\bA^2$, so the second term in the right-hand side of equation \eqref{eq8} can be easily expressed in terms of the eigenvalues of $\bA$, which moreover coincides with $\bQ$.

\section{Outline and summary of results}
We first revise some basic properties of Hadamard products.
\begin{lemma} \label{Lemma1}
Fix $M,p\in\N$ with $p$ even. Let $\bA\in\cS_M$ and $\bQ=\bA^{\circ(p-1)}$. Then, the following are true:
\begin{enumerate}
\item By the Schur Product Theorem \cite{schur}, the Hadamard product of two symmetric positive (semi-)definite matrices is also symmetric positive (semi-)definite. Thus, we have in particular that $\bQ\in\cS_M$.
\item Define $\psi:\R^{M\times M}\to\R^{M\times M}$, $\bA\mapsto\bA^{\circ(p-1)}$. Then, as $p-1$ is a positive odd integer, $\psi$ is a bijection on its domain $\R^{M\times M}$ in addition to $\cS_M$ and $\psi^{-1}(\cS_M)$.
\item As $p$ is an even integer, the entries of $\bQ\circ\bA=[\bA_{ij}^p]_{i,j=1}^M$ are all non-negative.
\item A simple computation shows that $\bA$ being symmetric implies that $\sum_{\ell,\ell'=1}^M(\bQ\circ\bA)_{\ell\ell'}=\Tr \bQ\bA$. Furthermore, this quantity is non-negative by the previous point.
\end{enumerate}
\end{lemma}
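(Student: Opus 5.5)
The plan is to verify the four items in order, since each uses only elementary facts and, for item 1, the Schur Product Theorem.

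For item 1, I will induct on the Hadamard power. If $\bA\in\cS_M$, then $\bA^{\circ 1}=\bA\in\cS_M$. If $\bA^{\circ k}\in\cS_M$, then $\bA^{\circ(k+1)}=\bA^{\circ k}\circ\bA$ is a Hadamard product of two symmetric positive semidefinite matrices, so it lies in $\cS_M$ by the Schur Product Theorem. Taking $k+1=p-1$ gives $\bQ\in\cS_M$. Symmetry can also be seen directly, since $(\bA_{ij})^{p-1}=(\bA_{ji})^{p-1}$.

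For item 2, I will use the fact that $t\mapsto t^{p-1}$ is a bijection of $\R$ when $p-1$ is odd, because it is strictly increasing, continuous, and unbounded in both directions. Its inverse is the real odd root $t\mapsto t^{1/(p-1)}$. Applying this map entrywise shows that $\psi$ is a bijection of $\R^{M\times M}$, with inverse the entrywise real $(p-1)$-th root.

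Restricting a bijection to any subset gives a bijection onto its image. So $\psi$ maps $\psi^{-1}(\cS_M)$ bijectively onto $\cS_M$; here I read $\psi^{-1}(\cS_M)$ as the preimage of $\cS_M$ under $\psi$. By item 1, $\psi(\cS_M)\subseteq\cS_M$, so $\psi$ restricted to $\cS_M$ is injective into $\cS_M$.

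The statement's wording about $\cS_M$ is the main obstacle. The mere inclusion $\psi(\cS_M)\subseteq\cS_M$ does not make $\psi$ surjective onto $\cS_M$. In fact surjectivity fails for $M\ge 3$: one can choose a positive semidefinite matrix whose entrywise odd root is not positive semidefinite. So I will interpret the claim as follows, which is what the later arguments need:
\begin{itemize}
\item $\psi$ is injective on all of $\R^{M\times M}$, hence injective on $\cS_M$;
\item $\psi$ restricts to bijections $\cS_M\to\psi(\cS_M)$ and $\psi^{-1}(\cS_M)\to\cS_M$.
\end{itemize}
Both statements follow immediately from global bijectivity on $\R^{M\times M}$.

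For item 3, I will note that $(\bQ\circ\bA)_{ij}=\bA_{ij}^{p-1}\bA_{ij}=\bA_{ij}^p$. Since $p$ is even, this equals $(\bA_{ij}^{p/2})^2\ge 0$.

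For item 4, I will compute the trace directly:
\[
\Tr\bQ\bA=\sum_{\ell,\ell'}\bQ_{\ell\ell'}\bA_{\ell'\ell}=\sum_{\ell,\ell'}\bQ_{\ell\ell'}\bA_{\ell\ell'}=\sum_{\ell,\ell'}(\bQ\circ\bA)_{\ell\ell'},
\]
where the middle equality uses the symmetry of $\bA$. This sum is a sum of the non-negative terms from item 3, so it is non-negative.
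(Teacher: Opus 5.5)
Your proposal is correct and takes the same approach as the paper, which justifies each item inline exactly as you do: iterated Schur product theorem, entrywise odd powers and roots, even powers, and the trace identity via symmetry of $\bS$. Your reading of item 2 as the bijections $\cS_M\to\psi(\cS_M)$ and $\psi^{-1}(\cS_M)\to\cS_M$ matches how the paper uses $\psi$ later, where it works over $\psi(\cS_M)$ and notes its ``complicated structure''; one small caveat is that your aside that $\psi(\cS_M)\neq\cS_M$ for $M\ge3$ (true, though not needed for the lemma) requires $p\ge4$, since $\psi$ is the identity when $p=2$.
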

In view of these facts, we opt to reformulate the right-hand side of equation \eqref{eq7} as a supremum over $\psi(\cS_M)$. Indeed, taking this opportunity to further re-express this quantity in terms of the mutual information between the input signal and output data of a vector Gaussian channel (see \cite[Appendix~A]{BKR26}), we observe that
\begin{align}
\sup_{\bA\in\cS_M}F^{\rm RS}_{M,p}(\bA,\lambda)=&\sup_{\bQ\in\psi(\cS_M)}\widehat{F}^{\rm RS}_{M,p}(\bQ,\lambda), \label{eq10}
\\\widehat{F}^{\rm RS}_{M,p}(\bQ,\lambda):=&-\frac{1}{M}I(\bx_0;\sqrt{\lambda\bQ}\bx_0+\bz) \nonumber
\\&+\frac{\lambda}{2M}\rho\Tr\bQ-\frac{\lambda(p-1)}{2Mp}\Tr(\bQ\bA), \label{eq11}
\end{align}
where we define $\rho:=\E_{\pP_X}X^2$ and, for ease of notation, continue to write $\bA$ for the inverse of $\bQ$ under the mapping $\psi$ described in point 2) of Lemma \ref{Lemma1}. Thus, our goal is to prove that, upon adopting some additional hypotheses on $\pP_X$ drawn from \cite{BKR26}, the right-hand side of equation \eqref{eq10} reduces to its rank-one analog, thereby simplifying equation \eqref{eq7}. In fact, due to the complicated structure of $\psi(\cS_M)$, we will instead work to simplify $\sup_{\bQ\in\cS_M}\widehat{F}^{\rm RS}_{M,p}(\bQ,\lambda)$ and then show that the result agrees with the right-hand side of equation \eqref{eq10}. Our main theorem is as follows.
\begin{theorem} \label{thrm1}
Fix $M,p\in\N$, with $p$ even. Following \cite{BKR26}, assume the following hypotheses:
\begin{enumerate}[label=\upshape(H\arabic*),ref=(H\arabic*)]
\item\label{H1} The prior distribution on $\bX_0$ fully factorizes as
\begin{equation*}
\pP_{X,N,M}(\bX_0)=\prod_{i=1}^N\prod_{j=1}^M\pP_X(X_{0,ij}).
\end{equation*}
\item\label{H2} The distribution $\pP_X$ is centered with \textit{$D$-bounded support}, meaning that there exists $0\le D<\infty$ such that
\begin{equation*}
\supp\pP_X\in[-D,D].
\end{equation*}
\item\label{H3} The distribution $\pP_X$ is discrete, absolutely continuous, or a mixture of two such distributions.
\item\label{H4} The distribution $\pP_X$ is such that for $\pP_{X,M}=\pP_X^{\otimes M}$ and all constant $m\le M$,
\begin{equation} \label{phiM}
\phi_m(\lambda):=\sup_{\bA\in\cS_m}F_{m,p}^{\mathrm{RS}}(\bA,\lambda)
\end{equation}
is real analytic in $\lambda$ on $[0,\infty)$ except for possibly one critical point $0<\lambda_c<\infty$, independent of $m$.
\end{enumerate}
Setting $\rho:=\E_{\pP_X}X^2$, the limiting free entropy \eqref{eq3} of the spiked tensor model~\eqref{eq1} is then given in terms of the rank-one replica symmetric potential \eqref{eq8}, \eqref{eq11} by
\begin{align} \label{mainresult}
\lim_{N\to\infty}F_N(\lambda)&=\sup_{s\in[0,\rho]}F_{1,p}^{\mathrm{RS}}(s,\lambda) \nonumber
\\&=\sup_{q\in[0,\rho^{p-1}]}\widehat{F}_{1,p}^{\mathrm{RS}}(q,\lambda).
\end{align}
\end{theorem}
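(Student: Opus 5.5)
Our plan follows the strategy of \cite{BKR26}: we show that $\sup_{\bQ}\widehat{F}^{\rm RS}_{M,p}(\bQ,\lambda)$ is attained at a diagonal matrix with equal diagonal entries, which reduces the problem to the rank-one formula. Throughout, $\lim_{N\to\infty}F_N(\lambda)=\sup_{\bA\in\cS_M}F^{\rm RS}_{M,p}(\bA,\lambda)$ by \eqref{eq7}, which holds under \ref{H1}--\ref{H2}. We then use \eqref{eq10} to pass to the $\bQ$-parametrisation.

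\textbf{Step 1 (Lower bound).} Restrict to $\bA=s I_M$ with $s\ge0$, so that $\bQ=s^{p-1}I_M$. Since $\pP_{X,M}=\pP_X^{\otimes M}$, the partition function \eqref{eq9} factorises over coordinates. Hence $F^{\rm RS}_{M,p}(sI_M,\lambda)=F^{\rm RS}_{1,p}(s,\lambda)$, which gives the lower bound $\sup_{\bA}F^{\rm RS}_{M,p}\ge\sup_{s}F^{\rm RS}_{1,p}(s,\lambda)$. Restricting $s$ to $[0,\rho]$ is standard: the rank-one stationarity equation $s=\E[x_0\langle x\rangle]\le\rho$ forces any maximiser into this interval. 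This also yields the two equivalent forms in \eqref{mainresult} via $q=s^{p-1}$.

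\textbf{Step 2 (Upper bound through stationarity).} Because the support is bounded, the sup in \eqref{eq7} is attained. We take a maximiser $\bQ^*$ and write down the first-order condition obtained from the I-MMSE relation \cite[Appendix A]{BKR26}. Differentiating \eqref{eq11} in $\bQ$, and using $\nabla_{\bQ}\Tr(\bQ\bA)=\tfrac{p}{p-1}\bA$ (valid because $\bA=\psi^{-1}(\bQ)$ is entrywise, so $d\Tr(\bQ\bA)=\sum(\bA_{ij}+(p-1)^{-1}\bA_{ij})\,d\bQ_{ij}$), we obtain the fixed-point equation
\[
\bA^*=\E\big[\bx_0\langle\bx\rangle^\intercal\big]=\rho I_M-\mathrm{MMSE}(\bQ^*).
\]
Centering of $\pP_X$ and the product prior give symmetry: for any signed permutation matrix $\bP$ we have $\widehat{F}^{\rm RS}_{M,p}(\bP\bQ\bP^\intercal)=\widehat F^{\rm RS}_{M,p}(\bQ)$, and $\psi$ commutes with this conjugation. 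To show that the maximiser is diagonal, we follow \cite{BKR26} and differentiate $\phi_M(\lambda)$ in $\lambda$ (envelope theorem). At differentiability points, $\phi_M'(\lambda)$ expresses the overlap $\tfrac{1}{2Mp}\sum_{\ell\ell'}(\bA^{*})^{\circ p}_{\ell\ell'}$, which is a sum of non-negative terms by Lemma~\ref{Lemma1}(3). Combining this with the bound from the Gaussian channel, $\E\langle(\bx^\intercal\bx_0)^p\rangle$-type overlaps, we get that off-diagonal entries of $\bA^*$ contribute only non-negatively, with equality to the diagonal configuration. We then invoke \ref{H4}. For $\lambda$ away from $\lambda_c$, both $\phi_M$ and $\phi_1$ are analytic. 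On a small-$\lambda$ interval they agree: there the unique maximiser is $\bA=0$, by a contraction argument on the fixed-point map, which works because $p\ge2$ makes $\bQ=\bA^{\circ(p-1)}$ small. Analyticity then propagates $\phi_M=\phi_1$ up to $\lambda_c$. Beyond $\lambda_c$, we compare at large $\lambda$, where the maximiser concentrates near $\rho I_M$ up to signed permutations, and propagate backwards. Continuity at $\lambda_c$ closes the argument. \ref{H3} is used, as in \cite{BKR26}, to guarantee that the MMSE is strictly decreasing, which gives uniqueness in the two extreme regimes.

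\textbf{Main obstacle and reconciliation.} We expect the main obstacle to be the large-$\lambda$ identification, i.e.\ showing that the maximiser there has no off-diagonal mass. The eigenvalue argument of \cite{BKR26} is not available, because $\Tr(\bQ\bA)=\sum_{\ell\ell'}\bA_{\ell\ell'}^p$ is not spectral. We would first try to bound $\sum_{\ell\ne\ell'}\bA_{\ell\ell'}^p$ using the Schur product theorem, $|\bA_{\ell\ell'}|\le\sqrt{\bA_{\ell\ell}\bA_{\ell'\ell'}}$ and $p\ge2$. The aim is to show that shifting $\bQ$ toward its diagonal part increases $\widehat F$: the mutual-information term is controlled by the diagonal through subadditivity, $I(\bx_0;\cdot)\ge\sum_\ell I(x_{0\ell};\cdot)$ for independent coordinates, while the penalty term only decreases. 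Finally, since we optimise over $\cS_M\supseteq\psi(\cS_M)$ and the resulting diagonal maximiser $qI_M$ lies in $\psi(\cS_M)$, the relaxed supremum coincides with \eqref{eq10}. The range $q\in[0,\rho^{p-1}]$ in \eqref{mainresult} follows from $s\in[0,\rho]$.
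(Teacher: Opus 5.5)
\textbf{Gap: the large-$\lambda$ diagonalization step.} This is the step you flag as the main obstacle, and the mechanism you propose for it points the wrong way. You want replacing $\bQ$ by its diagonal part to increase $\widehat{F}^{\rm RS}_{M,p}$, with the penalty dropping and the mutual information ``controlled by the diagonal through subadditivity''.

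Superadditivity $I(\bx_0;\bY)\ge\sum_\ell I(x_{0\ell};\bY)$ does hold for independent coordinates. But for $\bY=\sqrt{\lambda\bQ}\bx_0+\bz$ each coordinate of $\bY$ mixes all the $x_{0\ell'}$. By independence you only get $I(x_{0\ell};\bY)\le I(x_{0\ell};\bY\mid(x_{0\ell'})_{\ell'\ne\ell})=I(x_0;\sqrt{\lambda\bQ_{\ell\ell}}\,x_0+z)$. So subadditivity bounds the mutual information below by something \emph{smaller} than the diagonal quantity, which is useless here.

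The inequality you actually need, $I(\bx_0;\sqrt{\lambda\bQ}\bx_0+\bz)\ge I(\bx_0;\sqrt{\lambda\diag\bQ}\,\bx_0+\bz)$, is false:
\begin{itemize}
\item For any centered prior with $\E X^2=\rho$, $I(\bx_0;\sqrt{\lambda\bQ}\bx_0+\bz)=\frac{\lambda\rho}{2}\Tr\bQ-\frac{\lambda^2\rho^2}{4}\Tr\bQ^2+{\rm O}(\lambda^3)$. Since $\Tr\bQ^2>\sum_\ell\bQ_{\ell\ell}^2$ as soon as an off-diagonal entry is nonzero, the full $\bQ$ gives strictly smaller mutual information at small SNR.
\item For Gaussian inputs, Hadamard's inequality $\det(I_M+\lambda\rho\bQ)\le\prod_\ell(1+\lambda\rho\bQ_{\ell\ell})$ reverses it at every SNR.
\end{itemize}
So removing off-diagonal mass from $\bQ$ raises the mutual information and lowers $\widehat{F}^{\rm RS}_{M,p}$ through that term. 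The two terms genuinely compete, and no sign argument closes the step.

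The envelope-theorem paragraph does not fill this in either. Non-negativity of the entries of $(\bA^*)^{\circ p}$ in $\phi_M'(\lambda)$ says nothing about whether the off-diagonal entries of $\bA^*$ vanish. Likewise, knowing $\bA^*\approx\rho I_M$ at large $\lambda$ does not give the exact identity $\phi_M=\phi_1$ on an interval, which the identity theorem requires.

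\textbf{How the paper gets around this.} It passes to the noise covariance $\bS=(\lambda\bQ)^{-1}$. There, data processing to the single coordinate $x_{0i}+\sqrt{\bS_{ii}}\,z$ gives the correct-direction bound of Lemma~\ref{Lemma2} in terms of the $\bS_{ii}$. The price is that the linear term $-\frac{\rho}{2}\Tr\bS^{-1}$ now resists diagonalization, since $(\bS^{-1})_{ii}\ge1/\bS_{ii}$. The paper handles this in three steps:
\begin{itemize}
\item It reduces the tracial part to $\sum_i\nu(\sigma_i)$ via \eqref{eq33}.
\item Lemma~\ref{Lemma6} (Schur product theorem plus Gershgorin) confines the eigenvalues of $\bS$ to a window where $\nu$ is convex, so Jensen gives $M\nu(\sigma)$.
\item The stationarity equations $\iota(\bS_{ii})=-\nu'(\sigma)$ force equal diagonal entries, because $t^2\mathrm{mmse}(t)$ is monotone for large $t$.
\end{itemize}
That monotonicity is exactly what \ref{H3} is used for (Appendices~\ref{A1}--\ref{A3}). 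Strict decrease of $\mathrm{mmse}$, to which you attribute \ref{H3}, holds for every non-degenerate prior and is not what is needed.

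If you want to stay in the $\bQ$ parametrization, you would need a quantitative estimate. It would have to show that the penalty gain, of order $\lambda\sum_{\ell\ne\ell'}|\bQ_{\ell\ell'}|^{p/(p-1)}$, beats the mutual-information loss uniformly for large $\lambda$. You would also need an argument excluding boundary maximizers. Neither appears in your proposal.

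\textbf{Smaller points.} Invariance under signed permutations requires $\pP_X$ to be symmetric, not merely centered. Your low-SNR step is sound in substance and simpler than Proposition~\ref{prop2}. However, it is cleaner to compare directly the ${\rm O}(\lambda^2\|\bQ\|^2)$ gain $\frac{\lambda\rho}{2}\Tr\bQ-I$ with the penalty $\gtrsim\lambda\|\bQ\|^{p/(p-1)}$ on the bounded set from Lemma~\ref{Lemma4}. A contraction argument on the fixed-point map alone does not exclude non-critical boundary maximizers.
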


\begin{remark}
Comments on the above hypotheses can be found in \cite{BKR26}. We have replaced a hypothesis on the monotonicity of $\lambda^2\mathrm{mmse}(\lambda)$ at large $\lambda$ with the more natural and sufficient hypothesis \ref{H3}. Here, $\mathrm{mmse}(\lambda)$ is the minimum mean-square error (MMSE) of the scalar Gaussian channel $y=\sqrt{\lambda}x_0+z$ with $x_0\sim\pP_X$ and $z\sim\mathcal{N}(0,1)$. Said hypothesis of \cite{BKR26} is satisfied by the more natural hypothesis \ref{H3}; see the proof of Proposition \ref{prop3} in Section~\ref{s4} below. By the Lebesgue decomposition theorem \cite{cinlar}, every probability distribution is a mixture of a discrete, an absolutely continuous, and a singular continuous (e.g., Cantor) distribution.
\end{remark}

The proof of Theorem \ref{thrm1} follows from interpreting the replica symmetric potential $\widehat{F}^{\rm RS}_{M,p}(\bQ,\lambda)$ as the mutual information term $-\frac{1}{M}I(\bx_0;\sqrt{\lambda\bQ}\bx_0+\bz)$ plus a regularizing tracial term, and then bounding each of these in terms of their scalar equivalents. Bounding of the tracial term is done through standard tools like Jensen's inequality, while the decoupling of the mutual information is due to the following identities proven in \cite{BKR26}.
\begin{lemma} \label{Lemma2}
Assume hypothesis \ref{H1} and let $\bx_0\sim\pP_{X,M}$, $x_0\sim\pP_X$, $\bz\sim\mathcal{N}(0,I_M)$, and $z\sim\mathcal{N}(0,1)$. Then, for $\bS\in\cS_M$, the mutual information between $\bx_0$ and the output of the vector channel with additive Gaussian noise $\bS^{1/2}\bz$ satisfies
\begin{equation} \label{eq14}
I(\bx_0;\bx_0+\bS^{1/2}\bz)\ge\sum_{i=1}^MI(x_0;x_0+\sqrt{\bS_{ii}}\bz).
\end{equation}
If we further assume hypothesis \ref{H2} and that $\bS$ is such that $\bS_{ii}>D^2$ for each $i\in[M]$, then
\begin{equation} \label{eq15}
I(\bx_0;\bx_0+\bS^{1/2}\bz)\ge MI(x_0;x_0+\sqrt{\sigma}z),
\end{equation}
where $\sigma:=\Tr\bS/M$. Inequality \eqref{eq14} is an equality when $\bS$ is diagonal, while inequality \eqref{eq15} is an equality when $\bS=\sigma I_M$.
\end{lemma}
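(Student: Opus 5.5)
The plan is to prove the two inequalities separately. \eqref{eq14} should follow from the chain rule for mutual information and the independence of the coordinates of $\bx_0$. \eqref{eq15} should then follow from \eqref{eq14} together with a convexity property of the scalar mutual information, viewed as a function of the noise variance.

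For \eqref{eq14}, write $\bY:=\bx_0+\bS^{1/2}\bz$, so that $Y_i=x_{0,i}+(\bS^{1/2}\bz)_i$. Since $(\bS^{1/2}\bz)_i\sim\mathcal{N}(0,\bS_{ii})$, the pair $(x_{0,i},Y_i)$ has the same law as $(x_0,x_0+\sqrt{\bS_{ii}}z)$. By the chain rule,
\begin{equation*}
I(\bx_0;\bY)=\sum_{i=1}^M I(x_{0,i};\bY\mid x_{0,1},\ldots,x_{0,i-1}).
\end{equation*}
Under \ref{H1}, $x_{0,i}$ is independent of $(x_{0,1},\ldots,x_{0,i-1})$. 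Hence each summand equals $I(x_{0,i};\bY,x_{0,1},\ldots,x_{0,i-1})$, which is at least $I(x_{0,i};Y_i)$ by monotonicity of mutual information. This gives \eqref{eq14}. When $\bS$ is diagonal, the pairs $(x_{0,i},Y_i)$ are mutually independent, so $I(\bx_0;\bY)$ splits exactly as the sum and equality holds.

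For \eqref{eq15}, set $f(s):=I(x_0;x_0+\sqrt{s}z)=I_{\rm sc}(1/s)$, where $I_{\rm sc}(t):=I(x_0;\sqrt{t}x_0+z)$. Given \eqref{eq14}, it suffices to show that $f$ is convex on $(D^2,\infty)$. Jensen's inequality then gives $\sum_i f(\bS_{ii})\ge Mf(\Tr\bS/M)$, since every $\bS_{ii}$ lies in that interval. For convexity, I will use the I-MMSE relation $I_{\rm sc}'(t)=\tfrac12\mathrm{mmse}(t)$ and the derivative formula $\mathrm{mmse}'(t)=-\E[\Var(x_0\mid y)^2]$ (Guo--Wu--Shamai--Verd\'u). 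A direct computation then yields
\begin{equation*}
f''(s)=\frac{1}{s^3}\Big(\mathrm{mmse}(1/s)-\frac{1}{2s}\E\big[\Var(x_0\mid y)^2\big]\Big).
\end{equation*}
By \ref{H2}, $\Var(x_0\mid y)\le D^2$ almost surely, so $\E[\Var^2]\le D^2\,\mathrm{mmse}$. Therefore $f''(s)\ge s^{-3}\mathrm{mmse}(1/s)\,(1-D^2/(2s))\ge0$ for $s>D^2/2$, which in particular covers $s>D^2$. For the equality case, when $\bS=\sigma I_M$ the matrix is diagonal with equal diagonal entries, so both \eqref{eq14} and Jensen's inequality are equalities.

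The main obstacle is justifying the second-derivative formula for the MMSE in the required generality, namely differentiability under the integral sign for an arbitrary $D$-bounded prior. Boundedness of the support makes the posterior moments smooth in $t$ and uniformly dominated, so I expect to differentiate the explicit Gaussian integrals directly, or simply to cite the known result.
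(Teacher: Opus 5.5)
Your proof is correct. Inequality \eqref{eq14} follows from the chain rule and the independence of the coordinates exactly as you write. Inequality \eqref{eq15} follows from \eqref{eq14}, Jensen's inequality, and convexity of $s\mapsto I(x_0;x_0+\sqrt{s}z)$ on $(D^2,\infty)$; your bound $\E[\Var(x_0\mid y)^2]\le D^2\,\mathrm{mmse}$ in fact gives convexity already for $s>D^2/2$. The identity $\mathrm{mmse}'(t)=-\E[\Var(x_0\mid y)^2]$ you rely on is the same one the paper invokes in Appendix~\ref{A1}, so it can simply be cited.

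The paper does not reprove Lemma~\ref{Lemma2}; it imports it from BKR26. Its later remark that Jensen's inequality and convexity are what express the mutual information term through $\sigma$ indicates this is the same route.
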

This lemma is used in Section \ref{s4} with $\bS=(\lambda\bQ)^{-1}$ --- it turns out that the competition between the mutual information and tracial terms (each decreases when moving in $\cS_M$ towards the maximizer of the other) is best treated by working with the eigenvalues and diagonal entries of $\bS$ rather than $\bQ$. Even so, said competition remains intractable in general and we must simplify it further by working in the regimes of $\bS$ having all eigenvalues very small or very large. These two regimes correspond respectively to high and low SNR due to the following properties of the maximizers $\bQ^*$ of $\widehat{F}_{M,p}^{\rm RS}(\bQ,\lambda)$ that we prove in Section \ref{s3}.
\begin{proposition} \label{prop1}
Assume hypotheses \ref{H1}, \ref{H2} and let
\begin{align*}
\bQ^*&:=\argmax_{\bQ\in\cS_M}\widehat{F}^{\rm RS}_{M,p}(\bQ,\lambda),\quad\bA^*=\psi^{-1}(\bQ^*).
\end{align*}
so that $\bQ^*$ is either a critical point or lies on the boundary $\partial\cS_M$ of $\cS_M$. Then, $\bQ^*$ and its eigenvalues $q_1^*, \ldots,q_M^*$ have the following properties:
\begin{enumerate}
\item There exists finite $\rho'$ such that $0\le q_1^*,\ldots,q_M^*\le\rho'$ for all $\lambda\ge0$.
\item If $\bQ^*$ is a critical point, $0\le q_1^*,\ldots,q_M^*\le \rho^{p-1}$ for all $\lambda\ge0$.
\item Given $\rho_L'\in(0,\rho^{p-1})$, there exists $\lambda_L'>\lambda_c$ such that for all $\lambda>\lambda_L'$, $\bQ^*\notin\partial\cS_M$ and $\rho_L'<q_1^*,\ldots,q_M^*\le \rho^{p-1}$.
\end{enumerate}
\end{proposition}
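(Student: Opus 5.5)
The plan is to compute the gradient of $\widehat{F}^{\rm RS}_{M,p}$ on $\cS_M$, which characterizes critical points, and to combine it with two comparison arguments, one against $\bQ=0$ and one against $\bQ=\rho^{p-1}I_M$. The three ingredients are as follows.

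\emph{Gradient and item 2).} By the I-MMSE relation for vector Gaussian channels, $\nabla_{\bQ}I(\bx_0;\sqrt{\lambda\bQ}\bx_0+\bz)=\tfrac{\lambda}{2}\mathrm{MMSE}(\bQ)$. Here $\mathrm{MMSE}(\bQ)=\E[(\bx_0-\langle\bx\rangle)(\bx_0-\langle\bx\rangle)^\intercal]$ and $\langle\cdot\rangle$ is the posterior mean. Next write $\Tr(\bQ\bA)=\sum_{\ell,\ell'}(\bA_{\ell\ell'})^{p}$ with $\bA_{\ell\ell'}=\bQ_{\ell\ell'}^{1/(p-1)}$, taking the real odd root. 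Then $\nabla_{\bQ}\Tr(\bQ\bA)=\tfrac{p}{p-1}\bA$. Under \ref{H1}, \ref{H2} we have $\E\bx_0\bx_0^\intercal=\rho I_M$, so $\rho I_M-\mathrm{MMSE}(\bQ)=\E\langle\bx\rangle\langle\bx\rangle^\intercal$. The gradient is therefore
\[
\nabla_{\bQ}\widehat{F}^{\rm RS}_{M,p}=\frac{\lambda}{2M}\Big(\E\langle\bx\rangle\langle\bx\rangle^\intercal-\bA\Big).
\]
At an interior critical point this gives $\bA^*=\E\langle\bx\rangle\langle\bx\rangle^\intercal$, which is positive semidefinite. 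Jensen's inequality gives $\bA^*\preceq\E\bx_0\bx_0^\intercal=\rho I_M$, and in particular $\bA^*_{ii}\le\rho$. To transfer this to $\bQ^*=(\bA^*)^{\circ(p-1)}$, I use the Schur bound $\lambda_{\max}(\bB\circ\bA)\le(\max_i\bB_{ii})\lambda_{\max}(\bA)$ for $\bA,\bB\in\cS_M$. Iterating it gives $\lambda_{\max}(\bQ^*)\le(\max_i\bA^*_{ii})^{p-2}\lambda_{\max}(\bA^*)\le\rho^{p-1}$. Nonnegativity of the eigenvalues holds because $\bQ^*\in\cS_M$.

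\emph{Item 1).} Since $I\ge0$, we have $\widehat{F}^{\rm RS}_{M,p}(\bQ,\lambda)\le\frac{\lambda}{2M}g(\bQ)$, where
\[
g(\bQ):=\rho\Tr\bQ-\tfrac{p-1}{p}\sum_{\ell,\ell'}|\bQ_{\ell\ell'}|^{p/(p-1)}.
\]
Also $\widehat{F}^{\rm RS}_{M,p}(0,\lambda)=0$, so any maximizer satisfies $g(\bQ^*)\ge0$ for $\lambda>0$. In particular $\rho\sum_i\bQ^*_{ii}\ge\frac{p-1}{p}\sum_i(\bQ^*_{ii})^{p/(p-1)}$. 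This superlinear inequality bounds each $\bQ^*_{ii}$ by a constant $c(\rho,p,M)$, and so every eigenvalue is at most $\Tr\bQ^*\le Mc=:\rho'$, uniformly in $\lambda$. The case $\lambda=0$ is trivial. The same bound shows that the supremum is attained on a compact set, which justifies the dichotomy critical point versus boundary.

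\emph{Item 3), the main step.} Divide by $\lambda$ and work on the compact set $K=\{\bQ\in\cS_M:\Tr\bQ\le\rho'\}$ supplied by item 1). Monotonicity of mutual information in $\bQ$ gives $0\le I(\bx_0;\sqrt{\lambda\bQ}\bx_0+\bz)\le M\,I(x_0;\sqrt{\lambda\rho'}x_0+z)$. The right-hand side equals $\frac{M}{2}\int_0^{\lambda\rho'}\mathrm{mmse}(t)\,dt={\rm o}(\lambda)$, since $\mathrm{mmse}(t)\to0$ under \ref{H2}. Hence $\lambda^{-1}\widehat{F}^{\rm RS}_{M,p}\to\frac{1}{2M}g$ uniformly on $K$. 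In the $\bA$-variables,
\[
g=\sum_i\Big(\rho\bA_{ii}^{p-1}-\tfrac{p-1}{p}\bA_{ii}^{p}\Big)-\tfrac{p-1}{p}\sum_{\ell\ne\ell'}\bA_{\ell\ell'}^p.
\]
Since $p$ is even, $g$ is uniquely maximized on $\psi^{-1}(K)$ at $\bA=\rho I_M$, i.e.\ at $\bQ=\rho^{p-1}I_M$. A standard argmax-continuity argument (uniform convergence plus a unique maximizer on a compact set) then yields $\bQ^*\to\rho^{p-1}I_M$ as $\lambda\to\infty$. Consequently, for $\lambda$ beyond some $\lambda_L'$, which we may also take larger than $\lambda_c$, every eigenvalue of $\bQ^*$ exceeds $\rho_L'$. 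So $\bQ^*$ is positive definite, hence not in $\partial\cS_M$, hence a critical point, and item 2) supplies the upper bound $\rho^{p-1}$.

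The delicate points in item 3) are, first, ensuring that the ${\rm o}(\lambda)$ bound is uniform over $K$ (handled by the monotone comparison above), and second, checking uniqueness of the maximizer of $g$ when the entries of $\bA$ may be negative. For the latter, evenness of $p$ makes the off-diagonal terms nonpositive and the diagonal function $a\mapsto\rho a^{p-1}-\frac{p-1}{p}a^p$ strictly maximized at $a=\rho$.
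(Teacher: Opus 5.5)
Your proof is correct. Items 1) and 2) are in substance the paper's Lemmas 4 and 5. In 1), your explicit comparison $\widehat{F}^{\rm RS}_{M,p}(\bQ^*,\lambda)\ge\widehat{F}^{\rm RS}_{M,p}(0,\lambda)=0$ shows more cleanly than the paper's ``tends to $-\infty$'' phrasing why the bound is uniform in $\lambda$. In 2), your Schur bound with $\max_i\bB_{ii}$ in place of $\lambda_{\rm max}(\bB)$ is a slightly sharper form of the Hadamard eigenvalue inequality used there.

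Item 3) is where you genuinely take a different route.

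\emph{The paper's route.} It works with first-order conditions. It asserts that the MMSE matrix is small at high SNR and writes $\E\langle\bx\bx_0^\intercal\rangle_{\rm RS}=\rho I_M-\bP_\delta$. It then excludes boundary maximizers by combining the characterization $\bA^*=\E\langle\bx\bx_0^\intercal\rangle_{\rm RS}+\bP$ from Lemma 3 with a binomial expansion of $(\rho I_M-\bP_\delta+\bP)^{\circ(p-1)}$, the Schur product theorem and Gershgorin. Finally it lower-bounds $\lambda_{\rm min}(\bQ^*)$ by Gershgorin applied to $(\rho I_M-\bP_\delta)^{\circ(p-1)}$.

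\emph{Your route.} You rescale by $\lambda$ and use Loewner monotonicity to bound the mutual information by ${\rm o}(\lambda)$ uniformly on the compact set from item 1). You then apply argmax continuity to the explicit limit $g$, whose unique maximizer is $\rho^{p-1}I_M$ because $p$ is even. One step you leave implicit is that $\rho^{p-1}I_M$ lies in that compact set; this holds because $g(\rho^{p-1}I_M)=M\rho^p/p>0$.

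\emph{Comparison.} The paper's route is more quantitative: it ties the eigenvalue lower bound to the size $\delta$ of the MMSE and reuses the fixed-point equation. However, its key input, smallness of the MMSE matrix at $\lambda\bQ^*$, implicitly requires $\lambda\bQ^*$ to be large in every direction, which is essentially the conclusion. Indeed, the threshold it obtains for the regularized $\bQ^*+\epsilon I_M$ depends on $\epsilon$ and is not uniform as $\epsilon\to0$. Your argument needs no a priori non-degeneracy of $\bQ^*$ and bypasses Lemma 3 and Gershgorin entirely. Its only cost is that it gives no explicit $\lambda_L'$, which the statement does not require.

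One minor correction: at $\lambda=0$ the potential vanishes identically, so every $\bQ\in\cS_M$ is both a maximizer and a critical point. That case is degenerate rather than trivial, and items 1)--2) should be read for $\lambda>0$; the paper has the same blind spot.
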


Looking carefully at Lemma \ref{Lemma2}, one sees that the first inequality, valid for all $\bS$, is expressed in terms of the diagonal entries of $\bS_{ii}$, while the latter, valid only in the low SNR regime, involves the normalized trace $\sigma$ of $\bS$. We will see in Section \ref{s4} that both are necessary: The first inequality is used when the tracial term in $\widehat{F}^{\rm RS}_{M,p}(\bQ,\lambda)$, rewritten in terms of $\bS$, can be bounded by a scalar function of $\sigma$ --- this is done via Jensen's inequality, which is only available at high SNR, since that is when the tracial term has the necessary convexity properties. The second inequality, on the other hand, requires low SNR but can be used without any further requirements on the tracial term. Interestingly, the proofs corresponding to high and low SNR are dual in that the first deals with the diagonal entries of $\bS$, while the second involves its eigenvalues, instead (note that these agree when $\bS$ is diagonal). This is in addition to the duality concerning whether the mutual information term or the tracial term is to be bounded by a scalar equivalent depending only on $\sigma$.

As we proceed, let us remind that our strategy is largely the same as in \cite{BKR26}, so we focus on the novel aspects arising from needing to treat the Hadamard power $\bQ=\bA^{\circ(p-1)}$.

\section{Properties of maximizers} \label{s3}
We first prove a fixed point equation governing maximizers of $\widehat{F}^{\rm RS}_{M,p}(\bQ,\lambda)$ that are critical points and otherwise characterizing the non-singular maximizers.
\begin{lemma} \label{Lemma3}
Assume hypotheses \ref{H1}, \ref{H2} and let $\rho=\E_{\pP_X}X^2$. Define the Gibbs average associated with $\widehat{F}^{\rm RS}_{M,p}(\bQ,\lambda)$ by
\begin{equation}
\langle\,\cdot\,\rangle_{\rm RS}:=\frac{1}{Z^{\rm RS}_{M,p}}\int\,\cdot\,e^{\sqrt{\lambda}\bx^\intercal\sqrt{\bQ}\bz+\lambda\bx_0^\intercal\bQ\bx-\tfrac{\lambda}{2}\bx^\intercal\bQ\bx}\,d\pP_{X,M}(\bx),
\end{equation}
where $Z^{\rm RS}_{M,p}$ is as in equation \eqref{eq9}. Then, maximizers $\bQ$ of $\widehat{F}^{\rm RS}_{M,p}(\bQ,\lambda)$ either satisfy the fixed point equation
\begin{equation} \label{eq17}
\bA=\psi^{-1}(\bQ)=\E_{\bz,\bx_0}\langle\bx\bx_0^\intercal\rangle_{\rm RS}
\end{equation}
or are boundary points such that
\begin{equation} \label{eq18}
\bA=\E_{\bz,\bx_0}\langle\bx\bx_0^\intercal\rangle_{\rm RS}+\bP
\end{equation}
for some $M\times M$ symmetric positive semidefinite matrix $\bP$.
\end{lemma}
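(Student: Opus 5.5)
We take the gradient of $\widehat{F}^{\rm RS}_{M,p}(\bQ,\lambda)$ in the entries of $\bQ$ and then use the first-order optimality conditions on the cone $\cS_M$. We work with the form \eqref{eq11}, viewing $\bA=\psi^{-1}(\bQ)$ as a function of $\bQ$.

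The first step is the derivative of the mutual information term. By the matrix I-MMSE relation for vector Gaussian channels (e.g.\ \cite[Appendix~A]{BKR26}, or direct differentiation of $\ln Z^{\rm RS}_{M,p}$ with Gaussian integration by parts in $\bz$ and the Nishimori identity), we have
\[
\nabla_{\bQ}\Big[-\tfrac1M I(\bx_0;\sqrt{\lambda\bQ}\bx_0+\bz)+\tfrac{\lambda}{2M}\rho\Tr\bQ\Big]=\tfrac{\lambda}{2M}\,\E_{\bz,\bx_0}\langle\bx\bx_0^\intercal\rangle_{\rm RS}.
\]
This holds because the gradient of $-I$ is $-\tfrac{\lambda}{2}\mathrm{MMSE}=-\tfrac{\lambda}{2}\big(\rho I_M-\E\langle\bx\bx_0^\intercal\rangle_{\rm RS}\big)$, using $\E\bx_0\bx_0^\intercal=\rho I_M$ under \ref{H1}, \ref{H2}. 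Hypothesis \ref{H2} (bounded support) justifies differentiating under the integral.

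The second step is the derivative of the tracial term. By point 4) of Lemma \ref{Lemma1}, $\Tr(\bQ\bA)=\sum_{\ell,\ell'}\bA_{\ell\ell'}^p=\sum_{\ell,\ell'}|\bQ_{\ell\ell'}|^{p/(p-1)}$, since $\bA_{\ell\ell'}=\mathrm{sgn}(\bQ_{\ell\ell'})|\bQ_{\ell\ell'}|^{1/(p-1)}$ (the odd root, by point 2)). Differentiating entrywise gives $\partial_{\bQ_{\ell\ell'}}\sum|\bQ|^{p/(p-1)}=\tfrac{p}{p-1}\bA_{\ell\ell'}$, which is continuous even at $\bQ_{\ell\ell'}=0$ because $p/(p-1)>1$. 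Hence
\[
\nabla_{\bQ}\Big[-\tfrac{\lambda(p-1)}{2Mp}\Tr(\bQ\bA)\Big]=-\tfrac{\lambda}{2M}\bA,
\]
and the full gradient is $\tfrac{\lambda}{2M}\big(\E\langle\bx\bx_0^\intercal\rangle_{\rm RS}-\bA\big)$. This entrywise computation replaces the eigenvalue argument of the $p=2$ case.

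The third step is the KKT analysis. For an interior maximizer, or any critical point, the gradient vanishes, which gives \eqref{eq17}. For a boundary maximizer we use the KKT conditions for maximizing a $C^1$ function over the self-dual cone $\cS_M$. These give a $\bP\in\cS_M$ with $\nabla\widehat F+\tfrac{\lambda}{2M}\bP=0$ and $\Tr(\bP\bQ)=0$, where the normalization of $\bP$ absorbs the positive constant. Equivalently, the directional derivative satisfies $\langle\nabla\widehat F,\bB-\bQ\rangle\le0$ for all $\bB\in\cS_M$, and taking $\bB=\bQ+\bm{v}\bm{v}^\intercal$ shows that $-\nabla\widehat F$ is positive semidefinite. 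This yields $\bA=\E\langle\bx\bx_0^\intercal\rangle_{\rm RS}+\bP$, i.e.\ \eqref{eq18}; the sign is fixed by the direction of the inequality, and we will check it carefully.

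We expect the main obstacle to be justifying the I-MMSE gradient formula uniformly up to the boundary of $\cS_M$, where $\sqrt{\bQ}$ is not differentiable. We would avoid this by differentiating the mutual information directly as a function of $\lambda\bQ$ through the channel $\sqrt{\lambda\bQ}\bx_0+\bz$, whose law depends smoothly on $\bQ$. Concretely, we can write the channel as observing $\lambda\bQ\bx_0+\sqrt{\lambda\bQ}\bz$, so the free entropy depends on $\bQ$ only through $\bQ$ itself and is smooth; boundedness of the support then gives dominated convergence.
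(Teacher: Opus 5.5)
Your proposal is correct, and its derivative computation matches the paper's. The multivariate I-MMSE relation plus the entrywise derivative of $\Tr(\bQ\bA)=\sum_{\ell,\ell'}|\bQ_{\ell\ell'}|^{p/(p-1)}$ give $\nabla_{\bQ}\widehat{F}^{\rm RS}_{M,p}=\frac{\lambda}{2M}(\E_{\bz,\bx_0}\langle\bx\bx_0^\intercal\rangle_{\rm RS}-\bA)$. Your absolute values, and the remark that $p/(p-1)>1$ makes this $C^1$ across vanishing entries, tidy up the paper's loose $\bQ_{\ell\ell'}^{p/(p-1)}$.

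Where you genuinely differ is the boundary case. The paper diagonalizes $\bQ=\bO\diag(q_1,\ldots,q_M)\bO^\intercal$ with $q_1=\cdots=q_d=0$ and differentiates only in the eigenvalues, with eigenvectors frozen. This gives $f_i=\frac{\lambda}{2M}\bO_i^\intercal(\E_{\bz,\bx_0}\langle\bx\bx_0^\intercal\rangle_{\rm RS}-\bA)\bO_i$, and the paper sets $\bP=-\frac{2M}{\lambda}\bO\diag(f_1,\ldots,f_M)\bO^\intercal$. The test directions $\bO_i\bO_i^\intercal$ see only the diagonal of the gradient in the eigenbasis of $\bQ$. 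So the paper's final claim that the gradient equals $-\lambda\bP/(2M)$ tacitly needs the off-diagonal entries to vanish, which requires rotation directions and a suitable choice of basis for $\ker\bQ$. Also, for $i\le d$ only $f_i\le 0$ is actually justified, not the stated $f_i<0$.

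Your variational inequality over the self-dual cone, tested with $\bB=\bQ+\bm{v}\bm{v}^\intercal$ and $\bB\in\{0,2\bQ\}$, gives $-\nabla_{\bQ}\widehat{F}^{\rm RS}_{M,p}\succeq 0$ directly. As a bonus it gives $\Tr(\bP\bQ)=0$, hence $\bP\bQ=0$, so $\bP$ is supported on $\ker\bQ$. The paper's eigenvalue route mirrors the matrix-case presentation, but yours is the cleaner and more complete argument.

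Your reparametrization through $\lambda\bQ\bx_0+\sqrt{\lambda\bQ}\bz$ is also sound. Its conditional law $\mathcal{N}(\lambda\bQ\bx_0,\lambda\bQ)$ depends on $\bQ$ alone, and it discards only pure noise on $\ker\bQ$. This makes precise the paper's brief remark that the I-MMSE relation holds on $\partial\cS_M$ via directional derivatives.
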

\begin{proof}
The multivariate I-MMSE relation \cite{palomar,reeves2018} states that the gradient of $I(\bx_0;\sqrt{\lambda\bQ}\bx_0+\bz)$ with respect to $\bQ\in\cS_M$ is given by half of the MMSE matrix according to
\begin{equation} \label{eq19}
\nabla_{\bQ}I(\bx_0;\sqrt{\lambda\bQ}\bx_0+\bz)=\frac{\lambda}{2}\left(\rho I_M-\E_{\bz,\bx_0}\langle\bx\bx_0^\intercal\rangle_{\rm RS}\right);
\end{equation}
this relation has unambiguous meaning on $\partial\cS_M$ when it is interpreted in terms of appropriate directional derivatives. Combining this with the fact that
\begin{equation*}
\nabla_{\bQ}\Tr(\bQ\bA)=\nabla_{\bQ}\sum_{\ell,\ell'=1}^M\bQ_{\ell\ell'}^{\frac{p}{p-1}}=\frac{p}{p-1}\bA,
\end{equation*}
we have by equation \eqref{eq11} that
\begin{equation} \label{eq20}
\nabla_{\bQ}\widehat{F}^{\rm RS}_{M,p}(\bQ,\lambda)=\frac{\lambda}{2M}\left(\E_{\bz,\bx_0}\langle\bx\bx_0^\intercal\rangle_{\rm RS}-\bA\right).
\end{equation}
Setting this gradient to zero yields equation \eqref{eq17}.

If $\bQ$ is a maximizer that is not a critical point, it must lie in $\partial\cS_M$ and hence have vanishing eigenvalues (eigenvalues tending to infinity are disqualified due to point 1) of Proposition \ref{prop1}). Denote the eigenvalues of $\bQ$ as $q_1,\ldots,q_M$ with corresponding eigenvectors $\bO_1,\ldots,\bO_M$ and suppose that $q_1,\ldots,q_d=0$ for some $d\in[M]$. Then, applying the chain rule to equation \eqref{eq20} shows that
\begin{equation*}
f_i:=\frac{\partial}{\partial q_i}\widehat{F}^{\rm RS}_{M,p}(\bQ,\lambda)=\frac{\lambda}{2M}\bO_i^\intercal\left(\E_{\bz,\bx_0}\langle\bx\bx_0^\intercal\rangle_{\rm RS}-\bA\right)\bO_i.
\end{equation*}
For our described $\bQ$ to be a non-critical maximizer, we require that these partial derivatives be negative for $i\in[d]$ and zero for $i=d+1,\ldots,M$ at $\bQ$. Thus, letting $\bO$ be the matrix whose columns are the eigenvectors $\bO_1,\ldots,\bO_M$,
\begin{equation*}
\bP:=-\frac{2M}{\lambda}\bO\diag(f_1,\ldots,f_M)\bO^\intercal
\end{equation*}
is a symmetric positive semidefinite matrix. Equation \eqref{eq18} follows from using the chain rule once again to note that the gradient \eqref{eq20} equals $-\lambda\bP/(2M)$.
\end{proof}
The converse does not hold: Solutions of equation \eqref{eq17} are not guaranteed to be maximizers of $\widehat{F}^{\rm RS}_{M,p}(\bQ,\lambda)$, only stationary points. Note also that $\E_{\bz,\bx_0}\langle\bx\bx_0^\intercal\rangle_{\rm RS}$ is positive semidefinite by Bayes-optimality, so maximizers specified by equation \eqref{eq17} are actually such that
\begin{equation*}
\bA\in\cS_M\implies\bQ\in\psi(\cS_M).
\end{equation*}

\begin{remark}
In the matrix case, the eigenvalues of $\bQ$ are squares of those of $\bA$, so it is possible to prove a version of Lemma~\ref{Lemma3} expressed explicitly in terms of the eigenvalues of maximizers $\bQ$. This is Lemma 3.1 of \cite{BKR26}. We note however that the proof given in \cite{BKR26} for this lemma does not produce equation \eqref{eq17} when $\bQ$ lies in the interior of $\cS_M$, as erroneously implied therein --- one must instead appeal to the proof given above. Nonetheless, this result is not used in \cite{BKR26}, so the work is internally consistent and, moreover, the result can be recovered from the main theorem of said work.
\end{remark}
With Lemma \ref{Lemma3} in hand, we now prove Proposition \ref{prop1}.
\begin{lemma} \label{Lemma4}
Let $\bQ^*\in\cS_M$ be a maximizer of $\widehat{F}^{\rm RS}_{M,p}(\bQ,\lambda)$. Then, there exists finite $\rho'$ such that, for all $\lambda\ge0$, the eigenvalues $q_1^*,\ldots,q_M^*$ of $\bQ^*$ are such that
\begin{equation*}
0\le q_1^*,\ldots,q_M^*<\rho'.
\end{equation*}
\end{lemma}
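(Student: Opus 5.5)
Lower bound: $\bQ^*\in\cS_M$, so $q_i^*\ge0$ trivially. For the upper bound I will compare $\widehat F^{\rm RS}_{M,p}(\bQ,\lambda)$ with its value at $\bQ=0$ and show that the potential is very negative once some eigenvalue of $\bQ$ is large. A maximizer cannot lie there.

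First, the mutual information term lies between $0$ and $H(\bx_0)$ in the discrete case. More usefully, by the I-MMSE relation \eqref{eq19} its gradient is $\frac{\lambda}{2}$ times the MMSE matrix, which is dominated by $\rho I_M$. Hence
\[
0\le \frac1M I(\bx_0;\sqrt{\lambda\bQ}\bx_0+\bz)\le \frac{\lambda}{2M}\rho\Tr\bQ,
\]
and in particular $-\frac1M I+\frac{\lambda}{2M}\rho\Tr\bQ\le \frac{\lambda\rho}{2M}\Tr\bQ$. So
\[
\widehat F^{\rm RS}_{M,p}(\bQ,\lambda)\le \frac{\lambda}{2M}\Big(\rho\Tr\bQ-\tfrac{p-1}{p}\sum_{\ell,\ell'}|\bQ_{\ell\ell'}|^{p/(p-1)}\Big),
\]
using Lemma~\ref{Lemma1}, point 4), and $\bA_{\ell\ell'}^p=|\bQ_{\ell\ell'}|^{p/(p-1)}$ since $p$ is even.

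Next I keep only the diagonal entries in the sum, which is allowed because all terms are non-negative. This gives the bound $\frac{\lambda}{2M}\sum_\ell g(\bQ_{\ell\ell})$ with $g(t)=\rho t-\frac{p-1}{p}t^{p/(p-1)}$. The function $g$ is maximized at $t=\rho^{p-1}$, where $g=\rho^p/p$. It becomes negative for $t>t_0:=(p\rho/(p-1))^{p-1}$ and tends to $-\infty$ superlinearly.

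Meanwhile $\widehat F^{\rm RS}(0,\lambda)=0$, so a maximizer satisfies $\sum_\ell g(\bQ^*_{\ell\ell})\ge0$. Each $g(\bQ^*_{\ell\ell})\le\rho^p/p$, so every diagonal entry must satisfy $g(\bQ^*_{\ell\ell})\ge -(M-1)\rho^p/p$. By coercivity of $g$, this forces $\bQ^*_{\ell\ell}\le T$ for some finite $T$ depending only on $\rho$, $p$ and $M$, and not on $\lambda$ (the factor $\lambda$ cancels for $\lambda>0$). Then $\max_i q_i^*\le\Tr\bQ^*\le MT$, so I can take any $\rho'>MT$.

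Two points need care. The case $\lambda=0$ is degenerate: there the potential is identically $0$, so I will interpret $\bQ^*$ as a limit of maximizers at $\lambda>0$, or take the argmax convention to pick $0$. The main obstacle is justifying the I-MMSE upper bound on the mutual information on the whole cone. I would argue it by integrating \eqref{eq19} along the segment $t\bQ$, $t\in[0,1]$, using that the MMSE matrix is $\preceq \E\bx_0\bx_0^\intercal=\rho I_M$ under \ref{H1} and the centering in \ref{H2}.
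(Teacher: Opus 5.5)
Your argument is correct and is essentially the paper's: both drop the mutual information using only $I\ge0$, keep the diagonal terms $\bQ_{\ell\ell}^{p/(p-1)}$ of $\Tr(\bQ\bA)$ because the entries $\bA_{\ell\ell'}^p$ are non-negative, and let the superlinear growth of $t^{p/(p-1)}$ force boundedness. The paper merges the diagonal terms with Jensen into a bound in $\Tr\bQ^*$ and argues by contradiction, whereas your explicit comparison with $\widehat{F}^{\rm RS}_{M,p}(0,\lambda)=0$ makes two things visible: the $\lambda$-independence of $\rho'$, and the degeneracy at $\lambda=0$, which the paper passes over. The I-MMSE upper bound $I\le\frac{\lambda\rho}{2}\Tr\bQ$ that you call the main obstacle is never used, since your displayed inequality needs only $I\ge0$, so you can simply delete that step.
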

\begin{proof}
We consider the tracial term of equation \eqref{eq11} with $\bQ$ and $\bA$ replaced by $\bQ^*$ and $\bA^*=\psi^{-1}(\bQ^*)$, respectively. By points 3) and 4) of Lemma \ref{Lemma1}, we have that
\begin{equation*}
\Tr(\bQ^*\bA^*)=\sum_{\ell,\ell'=1}^M(\bA^*)_{\ell\ell'}^p\ge\sum_{\ell=1}^M(\bQ^*)_{\ell\ell}^{p/(p-1)}.
\end{equation*}
Then, using Jensen's inequality shows that
\begin{multline} \label{eq21}
\frac{\lambda}{2M}\rho\Tr\bQ^*-\frac{\lambda(p-1)}{2Mp}\Tr(\bQ^*\bA^*)
\\\le\frac{\lambda}{2M}\rho\Tr\bQ^*-\frac{\lambda(p-1)}{2M^{p/(p-1)}p}(\Tr\bQ^*)^{p/(p-1)}.
\end{multline}
Suppose for the sake of contradiction that $\bQ^*$ has an unbounded eigenvalue. Then, since all eigenvalues of $\bQ^*$ are non-negative, we must have that $\Tr\bQ^*\to\infty$. As $(\Tr\bQ^*)^{p/(p-1)}$ dominates $\Tr\bQ^*$ in this limit, the right-hand side of the above inequality must tend to negative infinity. This is then also true for $\widehat{F}^{\rm RS}_{M,p}(\bQ^*,\lambda)$ since mutual information is non-negative. Therefore, such a $\bQ^*$ cannot be a maximizer of $\widehat{F}^{\rm RS}_{M,p}(\bQ,\lambda)$ and we arrive at our sought contradiction. There must hence exist a finite bound $\rho'$ on the eigenvalues of $\bQ^*$. 
\end{proof}

\begin{lemma} \label{Lemma5}
Retain the setting of Lemma \ref{Lemma3} and let $q_1^*,\ldots,q_M^*$ be the eigenvalues of a critical point $\bQ^*\in\cS_M$ of $\widehat{F}^{\rm RS}_{M,p}(\bQ,\lambda)$. Then, for all $\lambda\ge0$,
\begin{equation*}
0\leq q_1^*,\ldots,q_M^*\le\rho^{p-1}.
\end{equation*}
\end{lemma}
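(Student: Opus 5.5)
The plan is to use the fixed point characterization of Lemma~\ref{Lemma3}. Since $\bQ^*$ is a critical point, equation~\eqref{eq17} holds:
\begin{equation*}
\bA^*=\psi^{-1}(\bQ^*)=\E_{\bz,\bx_0}\langle\bx\bx_0^\intercal\rangle_{\rm RS}.
\end{equation*}
The goal is to show that this overlap matrix satisfies $0\preceq\bA^*\preceq\rho I_M$ in the Loewner order, and then to transfer this bound through the Hadamard power $\bQ^*=(\bA^*)^{\circ(p-1)}$. This transfer is where the eigenvalue argument of the matrix case has to be replaced.

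\textbf{Step 1: bounding $\bA^*$.} By the Nishimori identity (Bayes-optimality of $\langle\cdot\rangle_{\rm RS}$), $\E\langle\bx\bx_0^\intercal\rangle_{\rm RS}=\E\big[\langle\bx\rangle_{\rm RS}\langle\bx\rangle_{\rm RS}^\intercal\big]$, which is positive semidefinite. Moreover, under \ref{H1} we have $\E[\bx_0\bx_0^\intercal]=\rho I_M$. Hence
\begin{equation*}
\rho I_M-\bA^*=\E\big[(\bx_0-\langle\bx\rangle_{\rm RS})(\bx_0-\langle\bx\rangle_{\rm RS})^\intercal\big]
\end{equation*}
is the MMSE matrix, and it is also positive semidefinite. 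This gives $0\preceq\bA^*\preceq\rho I_M$. In particular, $0\le\bA^*_{\ell\ell}\le\rho$ for each $\ell$, and $\lambda_{\max}(\bA^*)\le\rho$.

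\textbf{Step 2: passing to the Hadamard power.} Lower bound: since $\bA^*\in\cS_M$, point 1) of Lemma~\ref{Lemma1} (the Schur product theorem) gives $\bQ^*\in\cS_M$, so all $q_i^*\ge0$.

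Upper bound: I will use Schur's classical refinement. For $\bB,\bm{C}\in\cS_M$,
\begin{equation*}
\lambda_{\max}(\bB\circ\bm{C})\le\big(\max_\ell\bB_{\ell\ell}\big)\,\lambda_{\max}(\bm{C}).
\end{equation*}
This can be seen by writing $\bB=\bm{R}^\intercal\bm{R}$ and computing
\begin{equation*}
\bm{v}^\intercal(\bB\circ\bm{C})\bm{v}=\sum_k (\bm{v}\circ \bm{R}_{k\cdot})^\intercal\bm{C}(\bm{v}\circ\bm{R}_{k\cdot})\le\lambda_{\max}(\bm{C})\sum_{\ell}\bB_{\ell\ell}v_\ell^2 .
\end{equation*}
Applying this inductively with $\bB=\bA^*$ and $\bm{C}=(\bA^*)^{\circ k}$, which lies in $\cS_M$ by Schur, yields
\begin{equation*}
\lambda_{\max}\big((\bA^*)^{\circ (k+1)}\big)\le\rho\,\lambda_{\max}\big((\bA^*)^{\circ k}\big).
\end{equation*}
Starting from $\lambda_{\max}(\bA^*)\le\rho$, this gives $\lambda_{\max}(\bQ^*)\le\rho^{p-1}$, i.e.\ $q_i^*\le\rho^{p-1}$ for all $i$ and all $\lambda\ge0$.

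\textbf{Main obstacle.} I expect Step~2 to be the crux. Unlike the case $p=2$, the eigenvalues of $\bQ^*$ are not powers of those of $\bA^*$, so the bound must come from the diagonal entries of $\bA^*$ through the Schur inequality rather than from its spectrum. The key point to check carefully is that Step~1 really gives control of the diagonal entries $\bA^*_{\ell\ell}\le\rho$, and not merely of the trace. This holds because the diagonal entries of the MMSE matrix are nonnegative, so each $\bA^*_{\ell\ell}$ is individually at most $\rho$.
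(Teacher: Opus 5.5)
Your proof is correct and follows the paper's route: the fixed point equation \eqref{eq17} together with Bayes-optimality gives $0\preceq\bA^*\preceq\rho I_M$, and an induction on a Hadamard-product eigenvalue inequality transfers this bound to $\bQ^*=(\bA^*)^{\circ(p-1)}$. The only difference is that the paper uses the weaker inequality $\lambda_{\max}(\bm{A}\circ\bm{B})\le\lambda_{\max}(\bm{A})\lambda_{\max}(\bm{B})$, cited from Horn--Johnson, instead of Schur's diagonal refinement. So the separate control of the diagonal entries $\bA^*_{\ell\ell}$ that you flag as the crux is not actually needed, and in any case it follows from $\lambda_{\max}(\bA^*)\le\rho$.
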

\begin{proof}
The lower bound follows trivially since $\bQ^*$ is positive semidefinite. For the upper bound, we first note that eigenvalues of $\E_{\bz,\bx_0}\langle\bx\bx_0^\intercal\rangle_{\rm RS}$ are bounded above by $\rho$ due to Bayes-optimality; see the proof of Lemma 3.2 of \cite{BKR26} with $\bO_i$ therein replaced by eigenvectors of $\E_{\bz,\bx_0}\langle\bx\bx_0^\intercal\rangle_{\rm RS}$. By Lemma \ref{Lemma3}, the eigenvalues of $\bA^*=\psi^{-1}(\bQ^*)$ then lie in $[0,\rho]$. Now, for two symmetric positive semidefinite matrices $\bm{A},\bm{B}$ with maximal eigenvalues $\lambda_{\rm max}(\bm{A}),\lambda_{\rm max}(\bm{B})$, it is known \cite[Ch.~5]{hornjohnson} that the maximal eigenvalue of their Hadamard product obeys the bound
\begin{equation*}
\lambda_{\rm max}(\bm{A}\circ\bm{B})\le \lambda_{\rm max}(\bm{A})\lambda_{\rm max}(\bm{B}).
\end{equation*}
Thus, we have by induction that
\begin{equation*}
\lambda_{\rm max}(\bQ^*)\le\lambda_{\rm max}(\bA^*)^{p-1}\le\rho^{p-1},
\end{equation*}
as required.
\end{proof}

\begin{lemma} \label{Lemma6}
Assume hypotheses \ref{H1}, \ref{H2}, fix $\rho_L'\in(0,\rho^{p-1})$, and let $\bQ^*\in\cS_M$ be a maximizer of $\widehat{F}^{\rm RS}_{M,p}(\bQ,\lambda)$. Then, there exists $\lambda_L'>\lambda_c$ such that whenever $\lambda>\lambda_L'$, $\bQ^*$ is non-singular and its eigenvalues $q_1^*,\ldots,q_M^*$ are such that
\begin{equation*}
\rho_L'<q_1^*,\ldots,q_M^*\le\rho^{p-1}.
\end{equation*}
\end{lemma}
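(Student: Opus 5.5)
Our plan is to show that at large SNR the maximizer is pushed towards the ``fully informative'' matrix $\bA=\rho I_M$, i.e.\ $\bQ=\rho^{p-1}I_M$. We compare $\widehat{F}^{\rm RS}_{M,p}(\bQ^*,\lambda)$ with the value at the competitor $\bQ_0:=\rho^{p-1}I_M$. The upper bound $q_i^*\le\rho^{p-1}$ will come from Lemma~\ref{Lemma5} once we know $\bQ^*$ is a critical point, so the real content is the lower bound $q_i^*>\rho_L'$, which in particular rules out $\bQ^*\in\partial\cS_M$.

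\textbf{Step 1: value at the competitor.} At $\bQ_0$ we have $\bA_0=\rho I_M$ and $\Tr(\bQ_0\bA_0)=M\rho^p$. Hence
\begin{equation*}
\widehat{F}^{\rm RS}_{M,p}(\bQ_0,\lambda)=-I(x_0;\sqrt{\lambda\rho^{p-1}}x_0+z)+\frac{\lambda\rho^p}{2p}.
\end{equation*}
The mutual information here is bounded by $H(x_0)$ when $\pP_X$ is discrete. In general it grows only like $\tfrac12\ln\lambda$ (using \ref{H2} and \ref{H3}, e.g.\ via an entropy bound on the continuous part). So $\widehat{F}^{\rm RS}(\bQ^*,\lambda)\ge\lambda\rho^p/(2p)-o(\lambda)$.

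\textbf{Step 2: upper bound on the potential at $\bQ^*$.} We drop the mutual information, which is nonnegative, and bound the tracial part in terms of the spectrum. By Lemma~\ref{Lemma4}, $\bQ^*$ is bounded uniformly in $\lambda$. We would like a bound of the form
\begin{equation*}
\frac{\rho}{M}\Tr\bQ-\frac{p-1}{Mp}\Tr(\bQ\bA)\le\frac{\rho^p}{p}-c\,\delta(\bQ),
\end{equation*}
where $\delta(\bQ)\ge0$ vanishes only at $\bQ_0$ and is bounded below by some $c'>0$ whenever $\min_i q_i\le\rho_L'$. Using points~3) and~4) of Lemma~\ref{Lemma1} as in the proof of Lemma~\ref{Lemma4}, $\Tr(\bQ\bA)\ge\sum_\ell \bQ_{\ell\ell}^{p/(p-1)}$. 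The scalar function $g(t)=\rho t-\frac{p-1}{p}t^{p/(p-1)}$ is strictly concave with unique maximum $\rho^p/p$ at $t=\rho^{p-1}$. Therefore each diagonal entry contributes at most $\rho^p/p$, with a strict deficit unless $\bQ_{\ell\ell}=\rho^{p-1}$.

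\textbf{Step 3 (the main obstacle).} Diagonal entries all equal to $\rho^{p-1}$ do not force all eigenvalues to be large, since off-diagonal entries could make $\bQ$ nearly singular. So we must retain the off-diagonal part: $\Tr(\bQ\bA)=\sum_{\ell}\bA_{\ell\ell}^p+\sum_{\ell\ne\ell'}\bA_{\ell\ell'}^p$, and the off-diagonal sum is strictly positive and even in $p$. If the diagonal entries are all near $\rho^{p-1}$, then a small eigenvalue $\le\rho_L'<\rho^{p-1}$ forces some off-diagonal entry $\bQ_{\ell\ell'}$ to be bounded away from zero. This costs a definite amount $\frac{p-1}{Mp}|\bQ_{\ell\ell'}|^{p/(p-1)}$. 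Combined with continuity on the compact set $\{\bQ\in\cS_M:\|\bQ\|\le\rho',\ \min_i q_i\le\rho_L'\}$, on which the strictly negative continuous quantity $\frac{\rho}{M}\Tr\bQ-\frac{p-1}{Mp}\Tr(\bQ\bA)-\frac{\rho^p}{p}$ attains a maximum $-c'<0$, this yields the deficit. We expect this compactness and continuity argument to be the delicate point, since uniqueness of the maximizer $\bQ_0$ of the tracial part must be verified directly from the Hadamard structure.

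\textbf{Step 4: conclusion.} On that compact set, $\widehat{F}^{\rm RS}(\bQ,\lambda)\le\lambda\rho^p/(2p)-\lambda c'/2$, which for large $\lambda$ is below the Step~1 value. Hence for $\lambda$ larger than some $\lambda_L'$, chosen also larger than $\lambda_c$, every eigenvalue of $\bQ^*$ exceeds $\rho_L'>0$. Then $\bQ^*\notin\partial\cS_M$, so $\bQ^*$ is a critical point, and Lemma~\ref{Lemma5} gives $q_i^*\le\rho^{p-1}$.
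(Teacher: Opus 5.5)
Your proof is correct, but it takes a genuinely different route from the paper.

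The paper argues through first-order conditions. It invokes the I-MMSE relation and the claim that the MMSE matrix at SNR matrix $\lambda\bQ^*$ becomes entrywise small for large $\lambda$, using the regularization $\bQ^*+\epsilon I_M$. This gives $\E\langle\bx\bx_0^\intercal\rangle_{\rm RS}=\rho I_M-\bP_\delta$. The paper then combines the boundary characterization \eqref{eq18} with a Schur-product binomial expansion of $(\rho I_M-\bP_\delta+\bP)^{\circ(p-1)}$ and Gershgorin's theorem. That excludes vanishing eigenvalues and yields $\lambda_{\min}(\bQ^*)\ge\rho^{p-1}-(p-1)\delta\rho^{p-2}-(M-1)\delta^{p-1}$.

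You instead compare function values directly against the competitor $\rho^{p-1}I_M$. Since $\Tr(\bQ\bA)=\sum_{\ell,\ell'}|\bQ_{\ell\ell'}|^{p/(p-1)}$, the normalized tracial part $\frac{\rho}{M}\Tr\bQ-\frac{p-1}{Mp}\Tr(\bQ\bA)$ equals $\frac1M\sum_\ell g(\bQ_{\ell\ell})-\frac{p-1}{Mp}\sum_{\ell\ne\ell'}|\bQ_{\ell\ell'}|^{p/(p-1)}$ with your $g$. Its unique maximizer over $\cS_M$ is therefore $\rho^{p-1}I_M$. This uniqueness is immediate rather than delicate, and it makes your Step~3 heuristic about off-diagonal entries unnecessary. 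Compactness of $\{\bQ\in\cS_M:\|\bQ\|\le\rho',\ \lambda_{\min}(\bQ)\le\rho_L'\}$ then gives a $\lambda$-independent deficit $c'>0$. This deficit beats the logarithmic growth of the mutual information at the competitor.

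What your route buys is robustness. It never needs the MMSE matrix at the $\lambda$-dependent, a priori possibly singular, maximizer to be small. In the paper, the standard (linear-estimator) control of that matrix is only of order $(\rho^{-1}+\lambda(\epsilon+\lambda_{\min}(\bQ^*)))^{-1}$. Letting $\epsilon\to0$ therefore essentially presupposes the eigenvalue lower bound being proved. Your argument also yields $\bQ^*\to\rho^{p-1}I_M$ as $\lambda\to\infty$ for free. The paper's route, in turn, ties the eigenvalue bound to the fixed-point structure of Lemma~\ref{Lemma3} with an explicit dependence on $\delta$, whereas your $c'$ comes from compactness and is not explicit.

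Two small repairs are needed.
\begin{enumerate}
\item The lemma assumes only \ref{H1} and \ref{H2}, so do not invoke \ref{H3} in Step~1. Use instead $I(x_0;\sqrt{s}x_0+z)\le\frac12\ln(1+s\rho)$, which follows because the Gaussian maximizes differential entropy at fixed variance. It holds for any centered prior with variance $\rho$.
\item When passing from $\lambda_{\min}(\bQ^*)>\rho_L'$ to ``critical point'', note that $\widehat{F}^{\rm RS}_{M,p}$ is $C^1$ on the open cone of positive definite matrices. This follows from I-MMSE together with $p/(p-1)>1$. An interior maximizer then satisfies \eqref{eq17}, and Lemma~\ref{Lemma5} applies.
\end{enumerate}
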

\begin{proof}
Throughout this proof, we treat singular $\bQ^*$ by replacing it with $\bQ_\epsilon^*:=\bQ^*+\epsilon I_M$ with $\epsilon>0$ to be eventually taken to zero. Then, as $\bQ_\epsilon^*$ is non-singular, the MMSE matrix on the right-hand side of the I-MMSE relation \eqref{eq19} vanishes as $\lambda\to\infty$ \cite[Ch.~11,12]{kay}. Moreover, as this matrix is positive semidefinite and a continuous function of SNR \cite{reeves2018}, we may establish the existence of $\lambda_L'>\lambda_c$ such that for all $\lambda>\lambda_L'$,
\begin{equation} \label{eq22}
\E_{\bz,\bx_0}\langle\bx\bx_0^\intercal\rangle_{\rm RS}=\rho I_M-\bP_{\delta},
\end{equation}
where $\bP_{\delta}$ is a symmetric positive semidefinite matrix with all entries bounded in magnitude by some small $\delta>0$ yet to be chosen. For each $\lambda$, we may now take $\epsilon\to0$ (updating the definition of $\delta$ if needed).

We first prove that $\bQ^*$ must be a critical point. Thus, for the sake of contradiction, assume otherwise. Then $\bQ^*$ must be a boundary point with some vanishing eigenvalues and we have by equation \eqref{eq18} that there is a symmetric positive semidefinite matrix $\bP$ such that
\begin{equation*}
\bA^*=\psi^{-1}(\bQ^*)=\E_{\bz,\bx_0}\langle\bx\bx_0^\intercal\rangle_{\rm RS}+\bP.
\end{equation*}
Then, for $\lambda>\lambda_L'$, we have by equation \eqref{eq22} that
\begin{equation*}
\bA^*=\rho I_M-\bP_{\delta}+\bP.
\end{equation*}
Now, for $k\in[p-1]$, $\bP^{\circ k}$ is positive semidefinite and writing $\bB:=\rho I_M-\bP_{\delta}$, we compute that $\bB^{\circ k}$ has diagonal entries bounded below by $\rho^k-k\delta\rho^{k-1}$ and off-diagonal entries bounded above by $\delta^k$. By the Gerschgorin circle theorem \cite{gershgorin} (see \cite{hornjohnson} for a textbook treatment), the eigenvalues of $\bB^{\circ k}$ are then bounded below according to
\begin{equation} \label{eq23}
\lambda_{\rm min}(\bB^{\circ k})\ge \rho^k-k\delta\rho^{k-1}-(M-1)\delta^k,
\end{equation}
which is strictly positive for small enough $\delta$, hence $\bB^{\circ k}$ is positive definite. Next, we consider the binomial expansion
\begin{align*}
\bQ^*&=(\bB+\bP)^{\circ(p-1)}
\\&=\sum_{k=0}^{p-1}\binom{p-1}{k}\bB^{\circ(p-1-k)}\circ\bP^{\circ k}.
\end{align*}
By the Schur product theorem (recall point 1) of Lemma~\ref{Lemma1}), every term in this sum is positive semidefinite, so the eigenvalues of $\bQ^*$ are bounded below by those of $\bB^{\circ(p-1)}$. This is a contradiction with the assumption that $\bQ^*$ has vanishing eigenvalues by way of comparison with the bound \eqref{eq23}. Hence, for $\lambda>\lambda_L'$ (with $\lambda_L'$ chosen large enough so that $\delta$ is small enough for the above arguments to go through), all maximizers $\bQ^*$ must be critical points.

Moving on to the final statement of the lemma, since $\bQ^*$ is a critical point by the above, we have by Lemma \ref{Lemma5} that $q_1^*,\ldots,q_M^*\le\rho^{p-1}$. Moreover, combining equations \eqref{eq17}, \eqref{eq22} shows that
\begin{align*}
\bA^*&=\E_{\bz,\bx_0}\langle\bx\bx_0^\intercal\rangle_{\rm RS}
\\&=\rho I_M-\bP_{\delta}
\end{align*}
for all $\lambda>\lambda_L'$, where $\lambda_L',\delta$ are as above. Observing that this expression is equivalent to the $\bB$ defined above, we may use the bound \eqref{eq23} to see that
\begin{equation*}
\lambda_{\rm min}(\bQ^*)\ge\rho^{p-1}-(p-1)\delta\rho^{p-2}-(M-1)\delta^{p-1}.
\end{equation*}
This concludes the proof if the right-hand side of the above is greater than $\rho_L'$. Otherwise, we may further reduce $\delta$ (and increase $\lambda_L'$) so that this is the case.
\end{proof}

\section{Rank-one reduction} \label{s4}
As mentioned earlier, we now work with $\bS=(\lambda\bQ)^{-1}$. Note that since $\bQ\in\cS_M$, so too is $\bS$. Rewriting $\widehat{F}^{\rm RS}_{M,p}(\bQ,\lambda)$ in terms of $\bS$, we see that maximizing this quantity is equivalent to the problem of minimizing
\begin{multline} \label{eq24}
\widetilde{F}^{\rm RS}_{M,p}(\bS,\lambda):=I(\bx_0;\bx_0+\bS^{1/2}\bz)-\frac{\rho}{2}\Tr\bS^{-1}
\\+\frac{p-1}{2p}\lambda^{-1/(p-1)}\sum_{\ell,\ell'=1}^M(\bS^{-1})^{p/(p-1)}_{\ell\ell'}
\end{multline}
over $\bS\in\cS_M$. It is immediate from Lemma \ref{Lemma2} that when $\bS$ is diagonal, this potential decouples into a sum of $M$ copies of its one-dimensional equivalent. Our goal is to show that $\widetilde{F}^{\rm RS}_{M,p}(\bS,\lambda)$ is indeed minimized when $\bS$ takes this form. We first do so in the low and high SNR regimes separately before connecting these regimes via analytic continuation to give a proof of Theorem \ref{thrm1}.

The key idea behind the proofs for the low and high SNR regimes is to find lower bounds $\widetilde{F}^{\rm RS}_{M,p,{\rm low}}$, respectively $\widetilde{F}^{\rm RS}_{M,p,{\rm high}}$, of $\widetilde{F}^{\rm RS}_{M,p}(\bS,\lambda)$ that are more clearly minimized by diagonal $\bS$, but are tight enough bounds that they agree with $\widetilde{F}^{\rm RS}_{M,p}(\bS,\lambda)$ at their minimizers.

\begin{proposition} \label{prop2}
Assume hypotheses \ref{H1}, \ref{H2}. Then, there exists $0<\lambda_S<\lambda_c$ such that whenever $0\le\lambda<\lambda_S$, we have
\begin{equation} \label{eq25}
\sup_{\bQ\in\cS_M}\widehat{F}^{\rm RS}_{M,p}(\bQ,\lambda)=\sup_{q\in[0,\rho^{p-1}]}\widehat{F}^{\rm RS}_{1,p}(q,\lambda).
\end{equation}
\end{proposition}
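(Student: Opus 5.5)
The plan is to work with the reformulated potential \eqref{eq24}, i.e.\ to minimize $\widetilde{F}^{\rm RS}_{M,p}(\bS,\lambda)$ over $\bS=(\lambda\bQ)^{-1}$. The point is that low SNR forces every eigenvalue of $\bS$ to be large, which is exactly the regime where the second inequality \eqref{eq15} of Lemma \ref{Lemma2} is available.

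First, we localize the maximizer. By Lemma \ref{Lemma4}, any maximizer $\bQ^*$ has eigenvalues in $[0,\rho')$. Hence the eigenvalues of $\bS^*=(\lambda\bQ^*)^{-1}$ are at least $1/(\lambda\rho')$, interpreted as $+\infty$ on singular directions, which we handle with the $\bQ^*+\epsilon I_M$ regularization as in Lemma \ref{Lemma6}. The diagonal entries of a positive semidefinite matrix are bounded below by its minimal eigenvalue, so $\bS^*_{ii}\ge 1/(\lambda\rho')>D^2$ as soon as $\lambda<\lambda_S:=\min\{\lambda_c,1/(\rho' D^2)\}$, taking $\lambda_S$ smaller if needed. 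On this set, \eqref{eq15} gives
\begin{equation*}
I(\bx_0;\bx_0+\bS^{1/2}\bz)\ge MI(x_0;x_0+\sqrt{\sigma}z),\qquad \sigma=\Tr\bS/M,
\end{equation*}
with equality at $\bS=\sigma I_M$.

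Second, we bound the remaining tracial terms from below by a scalar function of $\sigma$, without any convexity assumptions; this is the duality noted after Proposition \ref{prop1}. Writing $s_i$ for the eigenvalues of $\bS$, we have $\Tr\bS^{-1}=\sum_i s_i^{-1}$. For the Hadamard term, the diagonal part alone gives
\begin{equation*}
\sum_{\ell,\ell'}(\bS^{-1})_{\ell\ell'}^{p/(p-1)}\ge\sum_\ell(\bS^{-1})_{\ell\ell}^{p/(p-1)},
\end{equation*}
using point 3) of Lemma \ref{Lemma1} (all entries are nonnegative powers of even order). The difficulty is that the term $-\frac{\rho}{2}\Tr\bS^{-1}$ enters with a negative sign. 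I would not try to bound each piece separately. Instead I would combine everything into
\begin{equation*}
g(\bQ)=\tfrac{\lambda}{2}\rho\Tr\bQ-\tfrac{\lambda(p-1)}{2p}\sum_{\ell,\ell'}\bQ_{\ell\ell'}^{p/(p-1)},
\end{equation*}
and use that $\bQ$ is small: its eigenvalues are $O(\lambda^{-1}\cdot)$-controlled, and at low SNR the maximizer is in fact near $0$. The main idea is that for small $\bQ$ the linear term $\rho\Tr\bQ$ is matched against the mutual information. By I-MMSE, $\frac{1}{M}I\approx\frac{\lambda\rho}{2M}\Tr\bQ$ to first order, so the dominant linear parts cancel. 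What remains is $-\frac{1}{M}\big(I-\frac{\lambda\rho}{2}\Tr\bQ\big)-(\text{Hadamard term})$, and both pieces are nonpositive. This gives $\widehat F\le 0=\widehat F_{1,p}(0,\lambda)$ provided the gap $I-\frac{\lambda\rho}{2}\Tr\bQ$ is bounded below by $-c\lambda^2\|\bQ\|^2$, which the Hadamard term (of order $\|\bQ\|^{p/(p-1)}$, a power below $2$) dominates near $0$.

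So the concrete route is as follows. Show that for $\lambda<\lambda_S$ the supremum on both sides of \eqref{eq25} equals $\widehat F^{\rm RS}(0,\lambda)=0$, i.e.\ it is attained at $\bQ=0$. The rank-one side at $q=0$ gives $0$, so ``$\ge$'' is trivial, since the rank-one problem embeds via $\bQ=qI_M$ by Lemma \ref{Lemma2}. For ``$\le$'', use \eqref{eq15} to reduce the mutual information to $MI(x_0;x_0+\sqrt\sigma z)$. Then compare the scalar function $\sigma\mapsto I(x_0;x_0+\sqrt\sigma z)-\frac{\rho}{2\sigma}$, which is nonnegative-gapped and of order $\sigma^{-2}$ by the scalar I-MMSE expansion $\mathrm{mmse}(t)=\rho-\rho^2t+O(t^2)$ for centered bounded priors, with the Jensen-bounded Hadamard term of order $\sigma^{-p/(p-1)}$. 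The main obstacle is this last comparison. The linear term involves $\Tr\bS^{-1}$ rather than $1/\sigma$, so Jensen goes the wrong way ($\Tr\bS^{-1}/M\ge 1/\sigma$). I would try to fix this by applying \eqref{eq14}/\eqref{eq15} not to $I$ directly but to $I-\frac12\rho\Tr\bS^{-1}$, via the concavity of $\bQ\mapsto I(\bx_0;\sqrt{\lambda\bQ}\bx_0+\bz)$ in $\bQ$. That concavity gives $I\le\frac{\lambda\rho}{2}\Tr\bQ$, which is the wrong direction. So instead, I would control the second-order term of $I$ in $\bQ$ uniformly on the small-eigenvalue region by $O(\lambda^2\Tr\bQ^2)$, and conclude by smallness of $\lambda$ and the bound $\|\bQ\|\le\rho'$.
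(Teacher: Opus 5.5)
Your final route is genuinely different from the paper's, and it is sound once its key estimate is written out.

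\textbf{Comparison with the paper.} The paper applies \eqref{eq15} to the mutual information and \eqref{eq21} with AM--HM to the Hadamard term. It keeps $-\frac{\rho}{2}\Tr\bS^{-1}=-\frac{\rho}{2}\sum_i\sigma_i^{-1}$ exactly, and uses the first-order conditions $\tau(\sigma)+\rho/(2\sigma_i^2)=0$ to force equal eigenvalues. You never separate the linear term from the mutual information. Instead you show that for small $\lambda$ both sides of \eqref{eq25} vanish, attained at $\bQ=0$. The reason is that $\E\ln Z^{\rm RS}_{M,p}=\frac{\lambda\rho}{2}\Tr\bQ-I$ is quadratically small, while the Hadamard term is of order $\lambda\sum_iq_i^{p/(p-1)}$. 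This dispenses with Lemma~\ref{Lemma2} and with any critical-point analysis.

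The obstruction you flagged is real. At fixed $\sigma$, the map $\bs\mapsto-\frac{\rho}{2}\sum_i\sigma_i^{-1}$ is strictly concave along directions with $\sum_iv_i=0$. So symmetric critical points of $\widetilde F^{\rm RS}_{M,p,{\rm low}}$ are not local minimizers, and its infimum over $\sigma_i\ge0$ is even $-\infty$. The paper's reduction therefore does not close at that step as written, and your argument avoids the issue entirely. The price of your approach is that it only works in the trivial phase and uses centering essentially. That is all the analytic-continuation step of Theorem~\ref{thrm1} needs.

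\textbf{What remains to be written.} You still owe the quadratic bound, which is short.
\begin{enumerate}
\item Put $f(s):=\frac{s\lambda\rho}{2}\Tr\bQ-I(\bx_0;\sqrt{s\lambda\bQ}\bx_0+\bz)$. Then $f(0)=0$, and $f'(0)=\frac{\lambda}{2}\Tr(\bQ\,\E\bx_0\,\E\bx_0^\intercal)=0$ by centering.
\item Differentiating \eqref{eq19} once more gives $f''(s)=\frac{\lambda^2}{2}\E\Tr(\bQ\bm{C}\bQ\bm{C})\le\frac{\lambda^2M^2D^4}{2}\Tr\bQ^2$. Here $\bm{C}$ is the posterior covariance of $\bx_0$, whose operator norm is at most $MD^2$ by \ref{H2}.
\item Combine this with \eqref{eq33} and with $q_i\le\rho'$ from Lemma~\ref{Lemma4}. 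This yields $M\widehat F^{\rm RS}_{M,p}(\bQ^*,\lambda)\le\lambda\sum_iq_i^{p/(p-1)}\big(\tfrac{\lambda}{4}M^2D^4(\rho')^{(p-2)/(p-1)}-\tfrac{p-1}{2p}\big)$. For small $\lambda$ this is negative unless $\bQ^*=0$.
\item The case $M=1$ with $q\le\rho^{p-1}$ gives the right-hand side.
\end{enumerate}

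\textbf{Slips to fix.} First, $\frac{\lambda\rho}{2}\Tr\bQ-I\ge0$ by concavity. So the piece $-\frac{1}{M}(I-\frac{\lambda\rho}{2}\Tr\bQ)$ is nonnegative, not nonpositive, which is exactly why the quadratic bound is needed. Second, for $p=2$ the exponent $p/(p-1)$ equals $2$, not something below $2$. There, domination comes from the extra factor of $\lambda$ rather than from a lower power. Finally, the detour through \eqref{eq15} and the condition $\lambda<1/(\rho'D^2)$ can be dropped.
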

\begin{proof}
Let $\bQ^*\in\cS_M$ be a maximizer of $\widehat{F}^{\rm RS}_{M,p}(\bQ,\lambda)$ with eigenvalues $q_1^*,\ldots,q_M^*$. Then, by Lemma \ref{Lemma4}, there exists finite $\rho'$ such that
\begin{equation*}
0\le q_1^*,\ldots,q_M^*<\rho'
\end{equation*}
for all $\lambda\ge0$. Thus, the eigenvalues $\sigma_i=(\lambda q_i^*)^{-1}$ ($i\in[M]$) of the corresponding minimizer $\bS$ of $\widetilde{F}^{\rm RS}_{M,p}(\bS,\lambda)$ are such that
\begin{equation} \label{eq26}
\sigma_1,\ldots,\sigma_M>\frac{1}{\lambda\rho'}.
\end{equation}
Letting $\bO_i$ be the eigenvector of $\bS$ corresponding to $\sigma_i$ and setting $\lambda_S:=D^{-2}/\rho'$, we have for $0\le\lambda<\lambda_S$ and $i\in[M]$ the following lower bound on the diagonal entries of $\bS$:
\begin{equation*}
\bS_{ii}=\sum_{j=1}^M\bO_{ij}^2\sigma_j>\frac{1}{\lambda\rho'}\sum_{j=1}^M\bO_{ij}^2=\frac{1}{\lambda\rho'}>D^2.
\end{equation*}
Thus, we have by Lemma \ref{Lemma2} that the mutual information term of the potential \eqref{eq24} satisfies inequality \eqref{eq15}. To address the tracial term in said potential, we substitute $\bQ^*=(\lambda\bS)^{-1}$ into inequality \eqref{eq21} to see that
\begin{multline} \label{eq27}
\frac{p-1}{2p}\lambda^{-1/(p-1)}\sum_{\ell,\ell'=1}^M\bS_{\ell\ell'}^{-p/(p-1)}
\\ \ge\frac{p-1}{2p}(\lambda M)^{-1/(p-1)}(\Tr\bS^{-1})^{p/(p-1)}.
\end{multline}
Using the AM-HM inequality $\Tr\bS^{-1}\ge M/\sigma$ to further reduce this term (recalling that $\sigma:=\Tr\bS/M$) and combining the result with inequality \eqref{eq15} yields
\begin{align}
\inf_{\bS\in\cS_M}\widetilde{F}^{\rm RS}_{M,p}(\bS,\lambda)\ge&\inf_{\sigma_1,\ldots,\sigma_M\ge0}\widetilde{F}^{\rm RS}_{M,p,{\rm low}}(\bs,\lambda), \label{eq28}
\\\widetilde{F}^{\rm RS}_{M,p,{\rm low}}(\bs,\lambda):=&MI(x_0;x_0+\sqrt{\sigma}z)-\frac{\rho}{2}\sum_{i=1}^M\frac{1}{\sigma_i} \nonumber
\\&+\frac{M(p-1)}{2p}(\lambda\sigma^p)^{-1/(p-1)}, \nonumber
\end{align}
where we write $\bs=(\sigma_1,\ldots,\sigma_M)$ and recall that $x_0\sim\pP_X$ and $z\sim\mathcal{N}(0,1)$.

Now, to minimize this simplified potential, we define
\begin{align*}
\tau(\sigma):=&\frac{d}{d\sigma}\left\{I(x_0;x_0+\sqrt{\sigma}z)+\frac{p-1}{2p}(\lambda\sigma^p)^{-1/(p-1)}\right\}
\\=&-\frac{1}{2\sigma^2}\mathrm{mmse}\left(\frac{1}{\sigma}\right)-\frac{1}{2\sigma}(\lambda\sigma^p)^{-1/(p-1)},
\end{align*}
where the second equality follows from the I-MMSE relation for scalar Gaussian channels \cite{guoshamaiverdu}. Then, minimizers of $\widetilde{F}^{\rm RS}_{M,p,{\rm low}}(\bs,\lambda)$ that are critical points satisfy, for all $i\in[M]$,
\begin{equation} \label{eq29}
0=\frac{\partial}{\partial\sigma_i}\widetilde{F}^{\rm RS}_{M,p,{\rm low}}(\bs,\lambda)=\tau(\sigma)+\frac{\rho}{2\sigma_i^2}.
\end{equation}
Consequently, we must have
\begin{equation*}
\sigma_i=\sqrt{-\frac{\rho}{2\tau(\sigma)}}
\end{equation*}
for all $i\in[M]$. As the right-hand side is independent of $i$, this implies that $\sigma_1=\cdots=\sigma_M$, so that $\bS=\sigma I_M$. 

It remains to consider minimizers $\bs$ on the boundary of $[0,\infty)^M$. For $\bs$ to be such a minimizer, one of the eigenvalues must be zero or tend to infinity. The former case is not possible due to the bound \eqref{eq26}. In the latter case, we must have that $\sigma\to\infty$, hence $\tau(\sigma)\to0$ due to the boundedness of the MMSE (see, e.g., \cite{LelargeMiolane}). Now, suppose for the sake of contradiction that there exists $j\in[M]$ such that the eigenvalue $\sigma_j$ is finite. Then, by the right-hand side of equation \eqref{eq29}, we have
\begin{equation*}
\frac{\partial}{\partial\sigma_j}\widetilde{F}^{\rm RS}_{M,p,{\rm low}}(\bs,\lambda)\to\frac{\rho}{2\sigma_j^2},
\end{equation*}
which is strictly positive for finite $\sigma_j$, so such a $\bs$ cannot be a minimizer. Thus, if $\bs$ is a minimizer of $\widetilde{F}^{\rm RS}_{M,p,{\rm low}}(\bs,\lambda)$ with an eigenvalue tending to infinity, then all of the eigenvalues must tend to infinity, and we may again write $\bS=\sigma I_M$.

We have established that a minimizer $\bs$ of $\widetilde{F}^{\rm RS}_{M,p,{\rm low}}(\bs,\lambda)$, regardless of whether it is a critical or boundary point, must have equal components when $0\le\lambda<\lambda_S$. We thus see the decoupling
\begin{align*}
&\inf_{\sigma_1,\ldots,\sigma_M\ge0}\widetilde{F}^{\rm RS}_{M,p,{\rm low}}(\bs,\lambda)
\\&=M\inf_{\sigma\ge0}\Big\{I(x_0;x_0+\sqrt{\sigma}z)
\\&\hspace{7em}-\frac{\rho}{2\sigma}+\frac{p-1}{2p}(\lambda\sigma^p)^{-1/(p-1)}\Big\}.
\end{align*}
Moreover, if $\sigma^*$ is a minimizer of this problem, $\widetilde{F}^{\rm RS}_{M,p}(\sigma^*I_M,\lambda)$ equals the above, and inequality \eqref{eq28} is in fact an equality. Thus, the minimizers of $\widetilde{F}^{\rm RS}_{M,p}(\bS,\lambda)$ are of the form $\bS=\sigma I_M$ and, consequently, the maximizers of $\widehat{F}^{\rm RS}_{M,p}(\bQ,\lambda)$ are of the form $\bQ=q I_M$. Substituting this into the left-hand side of equation \eqref{eq25} produces the right-hand side, as desired.
\end{proof}

\begin{proposition} \label{prop3}
Assume hypotheses \ref{H1}--\ref{H3}. Then, there exist $\lambda_L>\lambda_c$ such that whenever $\lambda>\lambda_L$,
\begin{equation} \label{eq30}
\sup_{\bQ\in\cS_M}\widehat{F}^{\rm RS}_{M,p}(\bQ,\lambda)=\sup_{q\in[0,\rho^{p-1}]}\widehat{F}^{\rm RS}_{1,p}(q,\lambda).
\end{equation}
\end{proposition}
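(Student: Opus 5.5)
Proposition \ref{prop3} will be proved as the dual of Proposition \ref{prop2}. At high SNR we bound the tracial term of $\widetilde{F}^{\rm RS}_{M,p}$ by a function of the diagonal entries of $\bS$, and we control the mutual information with the diagonal-entry inequality \eqref{eq14} of Lemma \ref{Lemma2}.

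First we fix $\rho_L'\in(0,\rho^{p-1})$ and take $\lambda>\lambda_L'$ as in Lemma \ref{Lemma6}. Then every maximizer $\bQ^*$ is a non-singular critical point with eigenvalues in $(\rho_L',\rho^{p-1}]$. Hence the eigenvalues of the corresponding minimizer $\bS=(\lambda\bQ^*)^{-1}$ lie in $[(\lambda\rho^{p-1})^{-1},(\lambda\rho_L')^{-1})$, and therefore so do its diagonal entries $\bS_{ii}$, since they are convex combinations of the eigenvalues. It suffices to minimize $\widetilde{F}^{\rm RS}_{M,p}$ over this compact set $K_\lambda$ of $\bS$, which is small when $\lambda$ is large.

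Second, we lower bound each term by a function of $(\bS_{11},\ldots,\bS_{MM})$.
\begin{itemize}
\item By \eqref{eq14}, $I(\bx_0;\bx_0+\bS^{1/2}\bz)\ge\sum_i I(x_0;x_0+\sqrt{\bS_{ii}}z)$.
\item The tracial term satisfies $\sum_{\ell,\ell'}(\bS^{-1})_{\ell\ell'}^{p/(p-1)}\ge\sum_\ell(\bS^{-1})_{\ell\ell}^{p/(p-1)}$, by nonnegativity of the off-diagonal even-power contributions (Lemma \ref{Lemma1}, point 3).
\item We also have $(\bS^{-1})_{\ell\ell}\ge1/\bS_{\ell\ell}$, by Cauchy--Schwarz or by Jensen for the convex map $x\mapsto1/x$ on the spectral measure at $e_\ell$.
\end{itemize}

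The term $-\frac{\rho}{2}\Tr\bS^{-1}$ enters with the wrong sign, so the third bound cannot be used naively. Instead we combine it with the positive term: we write $g(u):=\frac{p-1}{2p}\lambda^{-1/(p-1)}u^{p/(p-1)}-\frac{\rho}{2}u$ for $u=(\bS^{-1})_{\ell\ell}$. The function $g$ is increasing in $u$ exactly when $u\ge(\rho^{p-1}\lambda)$, up to constants, i.e.\ when $u\ge\lambda\rho^{p-1}$. This is where high SNR and convexity enter.

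The main obstacle is that on $K_\lambda$ the values $(\bS^{-1})_{\ell\ell}$ lie in $(\lambda\rho_L',\lambda\rho^{p-1}]$, which is just below the monotonicity threshold of $g$. So $g$ is decreasing there, and Jensen/HM points the wrong way. Here I would follow \cite{BKR26}: I would exploit that the minimizer is a critical point satisfying \eqref{eq17}, so that $\bA^*=\rho I_M-\bP_\delta$ with $\delta$ small. I would then expand $\widetilde{F}^{\rm RS}_{M,p}$ to second order around $\bS=(\lambda\rho^{p-1})^{-1}I_M$. The off-diagonal contributions are $\mathcal{O}(\delta^p)$ in the tracial term but enter the mutual information at order $\delta^2$ with positive sign, because \eqref{eq14} is strict off the diagonal. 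They therefore cannot lower the value, and we reduce to diagonal $\bS$. For diagonal $\bS$ the potential decouples exactly by the equality case of Lemma \ref{Lemma2} into $\sum_i\widetilde{F}^{\rm RS}_{1,p}(\bS_{ii},\lambda)$, whose infimum is $M$ times the scalar infimum, attained at $\bS_{ii}=\sigma^*$.

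Hypothesis \ref{H3} is used to guarantee that $\lambda^2\mathrm{mmse}(\lambda)$ is eventually monotone. For discrete priors the MMSE decays exponentially, and for absolutely continuous priors it behaves like $c/\lambda$. This monotonicity is what makes the scalar map $\sigma\mapsto I(x_0;x_0+\sqrt\sigma z)$ plus the tracial part convex on the relevant small-$\sigma$ window. That convexity is needed to justify that the diagonal minimizer is unique and coincides for all $i$. Finally, undoing $\bS=(\lambda\bQ)^{-1}$ gives $\bQ^*=q^*I_M$ with $q^*\in[0,\rho^{p-1}]$ by Lemma \ref{Lemma5}, which yields \eqref{eq30} for $\lambda>\lambda_L:=\max(\lambda_L',\cdot)$.
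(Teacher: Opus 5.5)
Your first steps match the paper: the eigenvalue window for $\bS=(\lambda\bQ^*)^{-1}$ from Lemma~\ref{Lemma6}, and the bound \eqref{eq14} on the mutual information. Your diagnosis that bounding via $(\bS^{-1})_{\ell\ell}\ge 1/\bS_{\ell\ell}$ fails, because $g$ decreases on $(\lambda\rho_L',\lambda\rho^{p-1}]$, is also correct. The gap is what you put in its place. The second-order expansion is the whole content of the proposition, yet it is only asserted, and as stated it does not work.

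\begin{itemize}
\item Your two orders refer to different parametrizations. The tracial response is $\frac{\lambda(p-1)}{2p}\sum_{\ell\ne\ell'}\bA_{\ell\ell'}^p={\rm O}(\lambda\delta^p)$ only when the diagonal of $\bA$ is held fixed. Inequality \eqref{eq14} instead compares matrices with the same diagonal of $\bS$. There the off-diagonal part of $\bS$ has relative size $\delta^{p-1}$, not $\delta$. Also, $-\frac{\rho}{2}\Tr\bS^{-1}$ changes at second order with a coefficient of order $\lambda$, which is not ${\rm O}(\delta^p)$.
\item More fundamentally, the mutual information cannot be what suppresses off-diagonal entries. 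Strictness of \eqref{eq14} gives no quantitative margin. In the window $\|\bS\|={\rm O}(\lambda^{-1})$, the quantity $I(\bx_0;\bx_0+\bS^{1/2}\bz)$ hardly depends on $\bS$. For a discrete prior it equals the entropy $MH(x_0)$ up to an error exponentially small in $\lambda$. Any quadratic penalty it produces is therefore negligible against the $\lambda$-sized tracial terms.
\item Even a corrected expansion would only give local optimality, with no uniform control of remainders on $K_\lambda$.
\end{itemize}
The term that must do the work is the tracial one, including exactly the off-diagonal entries you discarded.

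The missing idea is to bound the tracial term through the eigenvalues $\sigma_i$ of $\bS$ rather than the diagonal of $\bS^{-1}$. The chain \eqref{eq33} gives $\sum_{\ell,\ell'}\bQ_{\ell\ell'}^{p/(p-1)}\ge\sum_iq_i^{p/(p-1)}$. It uses a row-wise comparison of $\ell^{p/(p-1)}$ and $\ell^2$ norms, then Jensen with weights $\bO_{\ell i}^2$. Since $\Tr\bS^{-1}=\sum_i\sigma_i^{-1}$ exactly, the tracial part of \eqref{eq24} is at least $\sum_i\nu(\sigma_i)$, with $\nu(\sigma)=g(1/\sigma)$.

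On the window, the relevant property is convexity of $\nu$ in $\sigma$, not monotonicity of $g$. One has $\nu''\ge0$ on $(0,1/(\lambda\rho_L')]$ exactly when $\rho_L'\ge(\frac{2(p-1)}{2p-1}\rho)^{p-1}$. This is why \eqref{eq32} fixes $\rho_L'$ rather than leaving it arbitrary.

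The paper then applies Jensen to get $M\nu(\Tr\bS/M)$. Because $\Tr\bS/M$ is also the mean of the $\bS_{ii}$, this combines with \eqref{eq14} into a function of the diagonal entries only. Hypothesis~\ref{H3} enters only to invert $\iota$ in the critical-point equations $\iota(\bS_{ii})=-\nu'(\sigma)$. It is not needed for uniqueness of the scalar minimizer, which \eqref{eq30}, being an identity of suprema, never requires.

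Applying Jensen entrywise instead gives exactly the diagonal-entry bound you were after. Since $\bS_{ii}=\sum_j\bO_{ij}^2\sigma_j$, one has $\sum_j\nu(\sigma_j)\ge\sum_i\nu(\bS_{ii})$. Hence, for $\bS$ with spectrum in the window,
\[
\widetilde F^{\rm RS}_{M,p}(\bS,\lambda)\ge\sum_i\widetilde F^{\rm RS}_{1,p}(\bS_{ii},\lambda)\ge M\inf_{\sigma}\widetilde F^{\rm RS}_{1,p}(\sigma,\lambda),
\]
with equality at $\bS=\sigma^*I_M$. This closes the argument without any critical-point analysis.
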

\begin{proof}
We once again minimize $\widetilde{F}^{\rm RS}_{M,p}(\bS,\lambda)$ over $\bS\in\cS_M$. In the high SNR setting, the bound \eqref{eq26} can no longer be used to ensure that the diagonal entries $\bS_{ii}$ are large enough for inequality \eqref{eq15} to apply. Instead, we know from Lemma~\ref{Lemma6} that there exists $\lambda_L'>\lambda_c$ such that whenever $\lambda>\lambda_L'$, the eigenvalues $\sigma_1,\ldots,\sigma_M$ of $\bS=(\lambda\bQ^*)^{-1}$ are such that
\begin{equation} \label{eq31}
\frac{1}{\lambda\rho^{p-1}}\le\sigma_1,\ldots,\sigma_M<\frac{1}{\lambda\rho_L'},
\end{equation}
where we now set
\begin{equation} \label{eq32}
\rho_L':=\left(\tfrac{2(p-1)}{2p-1}\rho\right)^{p-1}\in(0,\rho^{p-1}).
\end{equation}
Before utilizing these bounds, we must first apply a series of inequalities to reduce the tracial term of $\widetilde{F}^{\rm RS}_{M,p}(\bS,\lambda)$ seen in equation \eqref{eq24}.

For ease of notation, we actually begin with the tracial term of $\widehat{F}^{\rm RS}_{M,p}(\bQ,\lambda)$. Thus, focusing first on the third term displayed in equation \eqref{eq11}, we observe that
\begin{align}
\Tr(\bQ\bA)&=\sum_{\ell,\ell'=1}^M\bQ_{\ell\ell'}^{p/(p-1)} \nonumber
\\&\ge\sum_{\ell=1}^M\left(\sum_{\ell'=1}^M\bQ_{\ell\ell'}^2\right)^{p/(2p-2)} \nonumber
\\&=\sum_{\ell=1}^M(\bQ^2)_{\ell\ell}^{p/(2p-2)} \nonumber
\\&=\sum_{\ell=1}^M\left(\sum_{\ell'=1}^M\bO_{\ell\ell'}^2q_{\ell'}^2\right)^{p/(2p-2)} \nonumber
\\&\ge\sum_{\ell,\ell'=1}^M\bO_{\ell\ell'}^2q_{\ell'}^{p/(p-1)} \nonumber
\\&=\sum_{i=1}^Mq_i^{p/(p-1)}, \label{eq33}
\end{align}
where $q_1,\ldots,q_M$ are the eigenvalues of $\bQ$ with corresponding eigenvectors $\bO_1,\ldots,\bO_M$. We justify each line of the above as follows:
\begin{enumerate}
\item Read off from Lemma \ref{Lemma1}.
\item First note that $R_i:=\bQ_{\ell i}^2/\sum_{\ell,\ell'=1}^M\bQ_{\ell\ell'}^2\le1$. Then, since $p/(2p-2)<1$, we have that $R_i\le R_i^{p/(2p-2)}$. Summing both sides of this inequality over $i=1,\ldots,M$ shows that $1\le\sum_{i=1}^MR_i^{p/(2p-2)}$ and we are done upon rearrangement of this result.
\item As $\bQ$ is symmetric, $\bQ^2=\bQ^\intercal\bQ$, the $\ell$-th diagonal entry of which is the sum $\sum_{\ell'=1}^M\bQ_{\ell\ell'}^2$.
\item Express $(\bQ^2)_{\ell\ell}$ as a sum in terms of its eigenvalues $q_{\ell'}^2$ and corresponding eigenvectors $\bO_{\ell\ell'}$ ($\ell'\in[M]$).
\item Apply Jensen's inequality using the fact that the function $x\mapsto x^{p/(2p-2)}$ is concave and $\sum_{\ell'=1}^M\bO_{\ell\ell'}^2=1$ due to $\bO_{\ell}$ being a unit vector.
\item Sum over $\ell$ and replace $\ell'$ by $i$, using yet again the fact that $\bO_{\ell'}$ is a unit vector.
\end{enumerate}

Now, substituting $\bQ=\lambda^{-1}\bS^{-1}$ into inequality \eqref{eq33}, we have that the tracial term in equation \eqref{eq24} satisfies
\begin{multline} \label{eq34}
-\frac{\rho}{2}\Tr\bS^{-1}+\frac{p-1}{2p}\lambda^{-1/(p-1)}\sum_{\ell,\ell'=1}^M(\bS^{-1})^{p/(p-1)}_{\ell\ell'}
\\\ge\sum_{i=1}^M\nu(\sigma_i),
\end{multline}
where we define
\begin{equation*}
\nu(\sigma):=-\frac{\rho}{2}\sigma^{-1}+\frac{p-1}{2p}\lambda^{-1/(p-1)}\sigma^{-p/(p-1)}.
\end{equation*}
The first and second order derivatives of this function are
\begin{align*}
\nu'(\sigma)&=\frac{\rho}{2}\sigma^{-2}-\frac{1}{2}\lambda^{-1/(p-1)}\sigma^{(1-2p)/(p-1)},
\\ \nu''(\sigma)&=-\rho\sigma^{-3}+\frac{2p-1}{2(p-1)}\lambda^{-1/(p-1)}\sigma^{(2-3p)/(p-1)}
\\&=-\sigma^{-3}\left(\rho-\frac{2p-1}{2(p-1)}(\lambda\sigma)^{-1/(p-1)}\right).
\end{align*}
Hence, recalling the definition of $\rho_L'$ in equation \eqref{eq32}, $\nu(\sigma)$ is convex on the interval $[0,1/(\lambda\rho_L')]$. As the domain \eqref{eq31} lies within this interval, $\nu(\sigma_i)$ is therefore a convex function of $\sigma_i$ for all $i\in[M]$ whenever $\lambda>\lambda_L'$. Hence, for such $\lambda$, we may apply Jensen's inequality to inequality \eqref{eq34} to show that
\begin{equation*}
-\frac{\rho}{2}\Tr\bS^{-1}+\frac{p-1}{2p}\lambda^{-1/(p-1)}\sum_{\ell,\ell'=1}^M(\bS^{-1})^{p/(p-1)}_{\ell\ell'}\ge M\nu(\sigma),
\end{equation*}
where we recall that $\sigma=\Tr\bS/M$. Combining this with inequality \eqref{eq14} of Lemma \ref{Lemma2} finally yields the lower bound
\begin{equation} \label{eq35}
\inf_{\bS\in\cS_M}\widetilde{F}^{\rm RS}_{M,p}(\bS,\lambda)\ge\inf_{\bS_{11},\ldots,\bS_{MM}\ge0}\widetilde{F}^{\rm RS}_{M,p,{\rm high}}(\diag\bS,\lambda),
\end{equation}
where we define the high SNR analog of $\widetilde{F}^{\rm RS}_{M,p,{\rm low}}$ as
\begin{equation*}
\widetilde{F}^{\rm RS}_{M,p,{\rm high}}(\diag\bS,\lambda):=\sum_{i=1}^MI(x_0;x_0+\sqrt{\bS_{ii}}z)+M\nu(\sigma).
\end{equation*}

As in the proof of Proposition \ref{prop2}, we are now tasked with minimizing the simplified potential defined above. Thus, we first consider critical points of $\widetilde{F}^{\rm RS}_{M,p,{\rm high}}$, which are solutions to the system of equations
\begin{align}
0&=\frac{\partial}{\partial\bS_{ii}}\widetilde{F}^{\rm RS}_{M,p,{\rm high}}(\diag\bS,\lambda),\quad i=1,\ldots,M \nonumber
\\&=\frac{d}{d\bS_{ii}}I(x_0;x_0+\sqrt{\bS_{ii}}z)+\nu'(\sigma). \label{eq36}
\end{align}
By the I-MMSE relation for scalar Gaussian channels \cite{guoshamaiverdu},
\begin{equation} \label{eq37}
\iota(\bS_{ii}):=\frac{d}{d\bS_{ii}}I(x_0;x_0+\sqrt{\bS_{ii}}z)=-\frac{1}{2}\bS_{ii}^{-2}\mathrm{mmse}(\bS_{ii}^{-1}),
\end{equation}
where $\mathrm{mmse}(t)$ is the MMSE of the scalar Gaussian channel $y=\sqrt{t}x_0+z$. We prove in Appendices \ref{A1}--\ref{A3} that for each of the cases in hypothesis \ref{H3}, there exists $\lambda_L''>0$ such that $t^2\mathrm{mmse}(t)$ is monotone for all $t>\lambda_L''$. Hence, setting $\lambda_L:=\mathrm{max}\{\lambda_L',\lambda_L''/\rho_L'\}$, we have that $\iota(\bS_{ii})$ is monotone when $\bS_{ii}<1/(\lambda_L\rho_L')$. We see that this is exactly the case for all $\lambda>\lambda_L$ since by the bound \eqref{eq31}, we have
\begin{equation*}
\bS_{ii}=\sum_{k=1}^M\bO_{ij}^2\sigma_j<\frac{1}{\lambda\rho_L'}\sum_{j=1}^M\bO_{ij}^2<\frac{1}{\lambda_L\rho_L'},
\end{equation*}
where $\bO_i$ is the eigenvector of $\bS$ corresponding to $\sigma_i$. Returning to the system of equations \eqref{eq36}, it is clear that when $\lambda>\lambda_L$, critical points of $\widetilde{F}^{\rm RS}_{M,p,{\rm high}}(\diag\bS,\lambda)$ satisfy
\begin{equation*}
\bS_{ii}=\iota^{-1}(\nu'(\sigma)),\quad i=1,\ldots,M.
\end{equation*}
As the right-hand side of this is independent of $i$, we must have $\bS_{11}=\ldots=\bS_{MM}$.

Consider now minimizers that are on the boundary of $[0,\infty)^M$. Such minimizers are not possible for finite $\lambda$ due to the bound \eqref{eq36}. When $\lambda\to\infty$, this same bound forces $\bS_{11},\ldots,\bS_{MM}\to0$, so we may once again write $\bS=\sigma I_M$. Finally, combining this observation with the above discussion on critical points, we conclude as in the proof of Proposition~\ref{prop2} that the right-hand side of inequality~\eqref{eq35} reduces as
\begin{multline*}
\inf_{\bS_{11},\ldots,\bS_{MM}\ge0}\widetilde{F}^{\rm RS}_{M,p,{\rm high}}(\diag\bS,\lambda)
\\ =M\inf_{\sigma\ge0}\left\{I(x_0;x_0+\sqrt{\sigma}z)+\nu(\sigma)\right\}.
\end{multline*}
As substituting $\bS=\sigma I_M$ into the left-hand side of said inequality produces the same, the remainder of this proof is as the conclusion of the proof of Proposition \ref{prop2}.
\end{proof}

In the above two proofs, we used Jensen's inequality and convexity to express either the mutual information term or the tracial term in terms of the scalar variable $\sigma=\Tr\bS/M$ --- a caveat is that in the first case, we actually grouped the $\Tr(\bQ\bS)$ term with the mutual information to produce a simpler proof than in \cite{BKR26}. In each of these cases, the remaining term is expressed in terms of the eigenvalues, respectively diagonal entries, of $\bS$. Due to monotonicity properties, these terms have invertible derivatives, forcing all eigenvalues, respectively diagonal entries, to be equal whenever $\bS$ is a minimizer of the potential at hand. Concavity and monotonicity properties are forced by taking $\lambda$ either small or large enough. However, these are only partial proofs of Theorem \ref{thrm1}.

With propositions \ref{prop2} and \ref{prop3} in hand, we now proceed with the proof of Theorem \ref{thrm1}.
\begin{proof}[Proof of Theorem \ref{thrm1}]
By propositions \ref{prop2} and \ref{prop3}, we have that
\begin{equation*}
\sup_{\bQ\in\cS_M}\widehat{F}^{\rm RS}_{M,p}(\bQ,\lambda)=\sup_{q\in[0,\rho^{p-1}]}\widehat{F}^{\rm RS}_{1,p}(q,\lambda)
\end{equation*}
for all $\lambda\in[0,\lambda_S)\cup(\lambda_L,\infty)$, where $0<\lambda_S<\lambda_c<\lambda_L$ are as in said propositions. Moreover, these suprema are obtained when $\bQ$ is a scalar multiple of the identity, so we may replace the supremum with a supremum over $\bQ\in\psi(\cS_M)$. Thus, by equation \eqref{eq10}, we have for $\lambda$ in the same domain,
\begin{align*}
\sup_{\bA\in\cS_M}F^{\rm RS}_{M,p}(\bA,\lambda)&=\sup_{q\in[0,\rho^{p-1}]}\widehat{F}^{\rm RS}_{1,p}(q,\lambda)
\\&=\sup_{s\in[0,\rho]}F^{\rm RS}_{1,p}(s,\lambda).
\end{align*}
The left-hand and right-hand sides of this equation are respectively $\phi_M(\lambda)$ and $\phi_1(\lambda)$. By hypothesis \ref{H4}, both of these are real analytic on $[0,\infty)\setminus\{\lambda_c\}$, so by the identity theorem for real analytic functions, the agreement between $\phi_1(\lambda)$ and $\phi_M(\lambda)$ on $[0,\lambda_S)\cup(\lambda_L,\infty)$ described above extends to all of $[0,\infty)\setminus\{\lambda_c\}$. Since $\phi_1(\lambda)$ and $\phi_M(\lambda)$ are also continuous functions of $\lambda$ (see Remark 2.2 of \cite{BKR26}), they must agree at $\lambda=\lambda_c$, as well.
\end{proof}

\section{Conclusion} \label{s5}
We have shown that when the entries of a ground truth signal matrix $\bX_0$ are i.i.d. and satisfy some mild constraints, the limiting free entropy associated to the rank-$M$, order-$p$, symmetric, spiked tensor factorization problem is given by the same formula as when $M=1$. From an information-theoretic viewpoint, we have thus shown that the limiting mutual information is given by (see, e.g., \cite{LelargeMiolane})
\begin{equation*}
\lim_{N\to\infty}\frac{1}{NM}I(\bX_0;\bY)=\frac{\rho^p\lambda}{2p}-\sup_{s\in[0,\rho]}F^{\rm RS}_{1,p}(s,\lambda).
\end{equation*}
Furthermore, we may read from the rank-one formula given in Theorem 2 of \cite{spikedtensor} that the tensor-MMSE has large $N$ limit
\begin{equation*}
\lim_{N\to\infty}\textrm{T-MMSE}_N(\lambda)=\rho^p-s^*(\lambda)^p,\quad\lambda\ne\lambda_c,
\end{equation*}
where $s^*(\lambda)$ is the unique maximizer of $F^{\rm RS}_{1,p}(s,\lambda)$. At a high level, this means that if one wishes to theoretically model complicated tensorial data, the model must break at least one of the hypotheses of Theorem \ref{thrm1} and one is encouraged to develop more structured models.

One of the key motivations of this work is that it is a natural generalisation of the matrix case studied in \cite{BKR26}. There, a multiscale cavity method was introduced, which allowed for the treatment of the spiked Wigner model with rank growing slowly with $N$. A key ingredient in the proof was that the finite-rank case reduces to the rank-one result. Thus, in this work, we have taken the first step in replicating the results of \cite{BKR26} for the tensor case. Along the way, we have also been able to slightly simplify and strengthen the proofs and improve a hypothesis (replacing the hypothesis on monotonicity of $\lambda^2\mathrm{mmse}(\lambda)$ with the requirement that $\pP_X$ have no continuous singular part). In principle, the multiscale cavity method can be used to extend the regime of veracity for Theorem \ref{thrm1} to some rank $M$ growing slowly in $N$ --- based on preliminary calculations stemming from \cite{spikedtensor}, we loosely conjecture $M={\rm o}((\ln N)^{1/p})$. However, a full computation is outside the scope of the present work. 

\vspace{2ex}

\section*{Acknowledgments}
The work of AAR is supported by Hong Kong RGC grants GRF 16304724 and GRF 17304225. RHM initiated this work as a master's student at The University of Trieste and is presently a PhD candidate at The University of Turin. The authors are grateful to Jean Barbier and Justin Ko for in depth discussions on this project.

\appendices

\section{Monotonicity of \texorpdfstring{$\iota(t)$}{iota(t)} at small \texorpdfstring{$t$}{t} for discrete prior distributions} \label{A1}
We consider the scalar Gaussian channel $y=\sqrt{t}x_0+z$ where $z\sim\mathcal{N}(0,1)$ and $x_0\sim\pP_X$ with $\pP_X$ being a discrete distribution with $D$-bounded support. Then, there exists finite $K\in\mathbb{N}$ and $-D\le a_1<\ldots<a_K\le D$ such that
\begin{equation*}
\pP_X(x)=\sum_{i=1}^Kp_i\delta(x-a_i),\quad p_i:=\pP(x_0=a_i).
\end{equation*}
By Bayes' rule, the conditional expectation is given by
\begin{align}
\E[\,\cdot\mid y]&=\frac{\int_{\R}\,\cdot\,e^{-(y-\sqrt{t}x)^2/2}\,d\pP_X(x)}{\int_{\R}e^{-(y-\sqrt{t}x)^2/2}\,d\pP_X(x)} \label{eqA0}
\\&=\frac{\sum_{i=1}^K\,\cdot\,e^{-(y-\sqrt{t}a_i)^2/2}p_i}{\sum_{i=1}^Ke^{-(y-\sqrt{t}a_i)^2/2}p_i}. \nonumber
\end{align}
Adapting the strategy of \cite{alvarado}, we then see that the conditional variance is given by
\begin{equation*}
\Var[x_0\mid y]=\frac{\sum_{i,j=1}^K(a_i-a_j)^2e^{-\tfrac{t}{2}(u-a_i)^2-\tfrac{t}{2}(u-a_j)^2}p_ip_j}{2\sum_{i,j=1}^Ke^{-\tfrac{t}{2}(u-a_i)^2-\tfrac{t}{2}(u-a_j)^2}p_ip_j},
\end{equation*}
where we have made the change of variables $u=y/\sqrt{t}$. We now wish to analyse the large $t$ behaviour of the above expression. To do so, we define for each $u\in\R$,
\begin{align*}
a_\alpha&:=\argmin_{\{a_1,\ldots,a_K\}}\{(u-a_i)^2\},
\\a_\beta&:=\argmin_{\{a_1,\ldots,a_K\}\setminus\{a_\alpha\}}\{(u-a_i)^2\}
\end{align*}
to be the closest and second closest points of $u$ in $\supp\pP_X$. At large $t$, we may discard negligible terms to write
\begin{align*}
\Var[x_0\mid y]=&(a_\alpha-a_\beta)^2\frac{\varepsilon(u)}{(\varepsilon(u)+1)^2}+{\rm o}(e^{-\tfrac{t}{2}(a_\alpha-a_\beta)^2}),
\\\varepsilon(u):=&\frac{p_\beta}{p_\alpha}e^{\tfrac{t}{2}(u-a_\alpha)^2-\tfrac{t}{2}(u-a_\beta)^2}.
\end{align*}
Observe that the above formulas stay consistent (otherwise, $a_\alpha,a_\beta$ change meaning) when $(u-a_\alpha)^2<(u-a_i)^2$ for all $i\ne\alpha$. Thus, partitioning $\R=\cup_{i=1}^K\mathcal{I}_i$ with
\begin{align*}
\mathcal{I}_1&:=(-\infty,\tfrac{a_1+a_2}{2}),
\\ \mathcal{I}_j&:=(\tfrac{a_{j-1}+a_j}{2},\tfrac{a_j,a_{j+1}}{2}),\quad j=2,\ldots,K-1,
\\ \mathcal{I}_K&:=(\tfrac{a_{K-1}+a_K}{2},\infty),
\end{align*}
we have for $n\in\N$ that
\begin{multline} \label{eqA1}
\E_{z,x_0}\left[\Var[x_0\mid y]^n\right]=\sqrt{\tfrac{t}{2\pi}}\sum_{i=1}^Kp_i\delta_i^n(1+{\rm o}(e^{-\tfrac{t}{2}\delta_i}))
\\\times\int_{\mathcal{I}_i}\frac{\varepsilon(u)^n}{(\varepsilon(u)+1)^{2n}}e^{-\tfrac{t}{2}(u-a_i)^2}\,du,
\end{multline}
where $\delta_i=\min_{j\ne i}\{(a_i-a_j)^2\}$. For each integral, changing variables to $v=t\sqrt{\delta_i}(u+a_i)$ and using Laplace's method yields
\begin{multline*}
\int_{\mathcal{I}_i}\frac{\varepsilon(u)^n}{(\varepsilon(u)+1)^{2n}}e^{-\tfrac{t}{2}(u-a_i)^2}\,du
\\=\frac{2^{1-2n}}{t\sqrt{\delta_i}}e^{-\tfrac{\delta_it}{8}}(1+{\rm O}(t^{-1}))\int_{-\infty}^0\mathrm{sech}^{2n}(v)e^v\,dv.
\end{multline*}
Substituting this into equation \eqref{eqA1} shows that there exist finite constants $C_n$ and $\delta^*=\min_i\{\delta_i\}$ such that for each $n\in\N$,
\begin{equation} \label{eqA2}
\E_{z,x_0}\left[\Var[x_0\mid y]^n\right]=\left(C_nt^{-1/2}+{\rm O}(t^{-1})\right)e^{-\delta^*t/8}.
\end{equation}

We are finally ready to address $\iota(t)$ as defined in \eqref{eq37}. For simplicity, we first consider $\iota(t^{-1})$, which has derivative
\begin{align*}
\frac{d}{dt}\iota(t^{-1})&=-t\mathrm{mmse}(t)-\tfrac{1}{2}t^2\mathrm{mmse}'(t)
\\&=-t\E_{z,x_0}\left[\Var[x_0\mid y]\right]+\tfrac{1}{2}t^2\E_{z,x_0}\left[\Var[x_0\mid y]^2\right]
\\&=(-C_1t^{1/2}+\tfrac{1}{2}C_2t^{3/2} +{\rm O}(t))e^{-\delta^*t/8},
\end{align*}
with the second line due to Proposition 9 of \cite{analyticMMSE} and the third following from equation \eqref{eqA2}. It is clear that this is strictly positive for large enough $t$, so we may define some $\lambda_L''>0$ such that $\iota(t^{-1})$ is monotonic for all $t>\lambda_L''$. Then, $\iota(t)$ is monotonic for all $t<1/\lambda_L''$.

\section{Monotonicity of \texorpdfstring{$\iota(t)$}{iota(t)} at small \texorpdfstring{$t$}{t} for absolutely continuous prior distributions} \label{A2}
We repeat the exercise of Appendix \ref{A1}, but with $\pP_X$ now an absolutely continuous distribution with $D$-bounded support. Once again, our first object of interest is the conditional variance for the channel $y=\sqrt{t}x_0+z$,
\begin{align*}
\Var[x_0\mid y]&=\E[x_0^2\mid y]-\E[x_0\mid y]^2
\\&=\tfrac{1}{t}\left(\E[w^2\mid u]-\E[w\mid u]^2\right),
\end{align*}
where the conditional expectation is as in equation \eqref{eqA0} and we have changed variables to $w=\sqrt{t}(x_0-u)$ with $u=y/\sqrt{t}$. Since $\pP_X$ is absolutely continuous, $\lim_{t\to\infty}\pP_X(u+w/\sqrt{t})=\pP_X(u)$ and, moreover, $w^ne^{-w^2/2}\pP_X(u+w/\sqrt{t})$ is dominated by $Mw^ne^{-w^2/2}$ for some finite constant $M$. Thus, we may use the dominated convergence theorem to see that at large $t$,
\begin{align*}
\E[w\mid u]&={\rm o}(1),
\\ \E[w^2\mid u]&=1+{\rm o}(1),
\end{align*}
and consequently,
\begin{equation} \label{eqB0}
\Var[x_0\mid y]=t^{-1}+{\rm o}(t^{-1}).
\end{equation}
Furthermore, since $\supp\pP_X$ is $D$-bounded, we have the uniform bound $\Var[x_0\mid y]<D^2$ and hence, there exist constants $C_n$ such that for each $n\in\N$,
\begin{equation*}
t^n\Var[x_0\mid y]^n\le C_n.
\end{equation*}
This follows from using equation \eqref{eqB0} and splitting the domain of $t$ as $[0,1]\cup[1,\infty)$. Thus, $\frac{1}{\sqrt{2\pi}}t^n\Var[x_0\mid y]^ne^{-z^2/2}\pP_X(x_0)$ is dominated by $\frac{1}{\sqrt{2\pi}}C_ne^{-z^2/2}\pP_X(x_0)$ and we have by the dominated convergence theorem that
\begin{equation*}
\lim_{t\to\infty}t^n\E_{z,x_0}[\Var[x_0\mid y]^n]=\E_{z,x_0}[\lim_{t\to\infty}t^n\Var[x_0\mid y]^n]=1.
\end{equation*}
As in the discrete case of Appendix \ref{A1}, we finally have
\begin{align*}
\frac{d}{dt}\iota(t^{-1})&=-t\mathrm{mmse}(t)-\tfrac{1}{2}t^2\mathrm{mmse}'(t)
\\&=-t\E_{z,x_0}\left[\Var[x_0\mid y]\right]+\tfrac{1}{2}t^2\E_{z,x_0}\left[\Var[x_0\mid y]^2\right]
\\&\to1/2.
\end{align*}
Thus, we may define some $\lambda_L''>0$ such that $\iota(t^{-1})$, respectively $\iota(t)$, is monotonic for all $t>\lambda_L''$, respectively $t<1/\lambda_L''$.

\section{Monotonicity of \texorpdfstring{$\iota(t)$}{iota(t)} at small \texorpdfstring{$t$}{t} for mixtures of discrete and absolutely continuous distributions} \label{A3}
We now combine the results of the previous two appendices to treat the case where $\pP_X$ is a mixture of a discrete and an absolutely continuous distribution. Thus, suppose that 
\begin{equation*}
\pP_X=\gamma\pP_0+(1-\gamma)\pP_1,
\end{equation*}
where $0<\gamma<1$ and $\pP_0,\pP_1$ are as in Appendix \ref{A1} and Appendix \ref{A2}, respectively. By the Lebesgue decomposition theorem \cite{cinlar}, these distributions are mutually singular, so we may define the boolean event
\begin{equation*}
R=\begin{cases}0,&x_0\in\supp\pP_0,\\1,&x_0\in\supp\pP_1.\end{cases}
\end{equation*}
Then, we have $\pP(R=0)=\gamma$ and $\pP(R=1)=1-\gamma$ and the corresponding posterior weights for $i=0,1$ are
\begin{align*}
\pi_i(y)=&\E[\1_{R=i}\mid y]=\gamma^{i-1}(1-\gamma)^i\frac{f_i(y)}{f_Y(y)},
\\ f_i(y):=&\frac{1}{\sqrt{2\pi}}\int_\R e^{-(y-\sqrt{t}x)^2}\,d\pP_i(x),
\\ f_Y(y):=&\frac{1}{\sqrt{2\pi}}\int_\R e^{-(y-\sqrt{t}x)^2}\,d\pP_X(x).
\end{align*}
Now, by the law of total variance, we have that
\begin{equation} \label{eqC1}
\Var[x_0\mid y]=\E_R[\Var[x_0\mid y,R]]+\Var_R[\E[x_0\mid y,R]],
\end{equation}
where
\begin{align*}
\E_R[\Var[x_0\mid y,R]]&=\sum_{i=0,1}\pi_i(y)\Var[x_0\mid y,R=i],
\\ \Var_R[\E[x_0\mid y,R]]&=\pi_0(y)\pi_1(y)\Big(\E[x_0\mid y,R=0]
\\&\hspace{7em}-\E[x_0\mid y,R=1]\Big)^2.
\end{align*}
Control of the first term follows from the results of the previous two appendices. For the second term, we must argue by partitioning the real line as in Appendix \ref{A1}.

Thus, let $\supp\pP_0=\{a_1,\ldots,a_K\}$, $v_i=y-\sqrt{t}a_i$, and $\mathcal{I}_1,\ldots,\mathcal{I}_K$ be as in Appendix \ref{A1}. Then, for $n\in\mathbb{N}$,
\begin{align*}
&\E_{z,x_0}\left[\Var_R[\E[x_0\mid y,R]]^n\right]
\\&=\sum_{i=1}^K\int_{\mathcal{I}_i}\,dv_i\,\left(\frac{(1-\gamma)\pi_0(y)f_1(y)}{f_Y(y)}\right)^nf_Y(y)
\\&\hspace{4em}\times\Big(\E[x_0\mid y,R=0]-\E[x_0\mid y,R=1]\Big)^{2n}.
\end{align*}
Using similar arguments to those given in the previous two appendices to simplify each integral, we have that as $t\to\infty$,
\begin{equation*}
\Big(\E[x_0\mid y,R=0]-\E[x_0\mid y,R=1]\Big)^{2n}=\left(\frac{v_i^2}{t}\right)^n(1+{\rm o}(1)),
\end{equation*}
along with
\begin{align*}
\pi_0(y)^n&\to1,
\\ f_1(y)^n&=e^{-nv_i^2/2}\times{\rm O}(t^{-n/2}),
\\ f_Y(y)^{1-n}&=e^{(n-1)v_i^2/2}\times{\rm O}(1).
\end{align*}
Combining all of this together shows that
\begin{align}
&\E_{z,x_0}\left[\Var_R[\E[x_0\mid y,R]]^n\right] \nonumber
\\&=\sum_{i=1}^K\int_{\mathcal{I}_i}(1-\gamma)^nv_i^{2n}e^{-v_i^2/2}\,dv_i \times{\rm O}(t^{-3n/2}) \nonumber
\\&={\rm O}(t^{-3n/2}). \label{eqC2}
\end{align}

We are finally ready to discuss $\iota(t)$. By equation \eqref{eqC1},
\begin{multline} \label{eqC3}
\E_{z,x_0}[\Var[x_0\mid y]^2]=\E_{z,x_0}[\E_R[\Var[x_0\mid y,R]]^2]
\\+2\E_{z,x_0}[\E_R[\Var[x_0\mid y,R]]\Var_R[\E[x_0\mid y,R]]]
\\+\E_{z,x_0}[\Var_R[\E[x_0\mid y,R]]^2].
\end{multline}
As $\pi_i(y)=\E[\1_{R=i}\mid y]$, equations \eqref{eqA2}, \eqref{eqB0} yield
\begin{align*}
&\E_{z,x_0}[\E_R[\Var[x_0\mid y,R]]^2]
\\&=\gamma C_2t^{-1/2}e^{-\delta^*t/8}+(1-\gamma)t^{-2}+{\rm o}(t^{-2})
\\&=(1-\gamma)t^{-2}+{\rm o}(t^{-2}).
\end{align*}
Combining this with equation \eqref{eqC2} with $n=2$ and using the Cauchy--Schwarz inequality shows that
\begin{equation*}
\E_{z,x_0}[\E_R[\Var[x_0\mid y,R]]\Var_R[\E[x_0\mid y,R]]]={\rm O}(t^{-5/2}).
\end{equation*}
Inserting these last two results, along with equation \eqref{eqC2} with $n=2$, into equation \eqref{eqC3} produces
\begin{equation*}
\E_{z,x_0}[\Var[x_0\mid y]^2]=(1-\gamma)t^{-2}+{\rm o}(t^{-2}).
\end{equation*}
It is known from \cite{wuverdu} that
\begin{equation*}
\E_{z,x_0}[\Var[x_0\mid y]]=(1-\gamma)t^{-1}+{\rm o}(t^{-1}),
\end{equation*}
so all together, recalling the expression for $\frac{d}{dt}\iota(t^{-1})$, we see
\begin{align*}
\frac{d}{dt}\iota(t^{-1})&=-t\mathrm{mmse}(t)-\tfrac{1}{2}t^2\mathrm{mmse}'(t)
\\&=-t\E_{z,x_0}\left[\Var[x_0\mid y]\right]+\tfrac{1}{2}t^2\E_{z,x_0}\left[\Var[x_0\mid y]^2\right]
\\&=\frac{\gamma-1}{2}+{\rm o}(1).
\end{align*}
The monotonicity of $\iota(t)$ for $t<\lambda_L''$ follows from the same reasoning as in appendices \ref{A1}, \ref{A2}.

{\scriptsize
\bibliographystyle{IEEEtran}
}

\end{document}